\documentclass[10pt, a4paper]{article}

\usepackage[dvips]{graphicx}
\usepackage[top=2.4cm, bottom=2.4cm, left=2cm, right=2cm]{geometry}
\usepackage{multirow}
\usepackage{todonotes}
\usepackage{amssymb,amsmath,amsthm}
\usepackage{stmaryrd}
\usepackage[ruled, linesnumbered]{algorithm2e}
\SetKw{Break}{break}
\usepackage{tikz}
\usepackage{graphicx}
\usepackage[caption=false]{subfig}
\usepackage[space]{grffile}

\usepackage{url}
\usepackage{xspace}

\usepackage{amsmath,amsfonts,amsthm,paralist}
\newtheorem{theorem}{Theorem}
\newtheorem{lemma}{Lemma}

\newtheorem{definition}{Definition}
\newtheorem{assumption}{Assumption}

\newcommand\resource[2]{\ensuremath{p_{#1}^{(#2)}}\xspace}
\newcommand\resources[1]{\ensuremath{p_{#1}}\xspace}
\newcommand\allresources{\ensuremath{\mathbf{p}}\xspace}
\newcommand\allstarttimes{\ensuremath{\mathbf{s}}\xspace}
\newcommand\maxresource[1]{\ensuremath{P^{(#1)}}\xspace}
\newcommand\exectime[1]{\ensuremath{t_{#1}}\xspace}

\newcommand\opt{\ensuremath\textsc{opt}\xspace}

\DeclareMathOperator*{\argmin}{arg\,min}

\newenvironment{proofsketch}{%
  \proof}{\endproof}

\SetKwRepeat{Do}{do}{while}
\SetKw{Break}{break}
\SetKwFor{When}{when}{do}{end}

\newcommand{\tabincell}[2]{\begin{tabular}{@{}#1@{}}#2\end{tabular}}

\def\abstract{\vspace{.5em}
{\textit{\bf Abstract}:\,\relax%
}}

\def\keywords{\vspace{.5em}
{\textit{\bf Keywords}:\,\relax%
}}
\def\endkeywords{\par}

\begin{document}

\title{Multi-Resource List Scheduling of Moldable Parallel Jobs under Precedence Constraints}
\author{Lucas Perotin$^\S$, Hongyang Sun$^\dag$, Padma Raghavan$^\dag$ \\
\\
$^\S$Laboratoire LIP, ENS Lyon, Lyon, France\\
$^\dag$Vanderbilt University, Nashville, TN, USA\\
lucas.perotin@ens-lyon.fr; \{hongyang.sun, padma.raghavan\}@vanderbilt.edu
}
\date{}
\maketitle

\begin{abstract}
The scheduling literature has traditionally focused on a single type of resource (e.g., computing nodes). However, scientific applications in modern High-Performance Computing (HPC) systems process large amounts of data, hence have diverse requirements on different types of resources (e.g., cores, cache, memory, I/O). All of these resources could potentially be exploited by the runtime scheduler to improve the application performance. In this paper, we study multi-resource scheduling to minimize the makespan of computational workflows comprised of parallel jobs subject to precedence constraints. The jobs are assumed to be moldable, allowing the scheduler to flexibly select a variable set of resources before execution. We propose a multi-resource, list-based scheduling algorithm, and prove that, on a system with $d$ types of schedulable resources, our algorithm achieves an approximation ratio of $1.619d+2.545\sqrt{d}+1$ for any $d$, and a ratio of $d+O(\sqrt[3]{d^2})$ for large $d$. We also present improved results for independent jobs and for jobs with special precedence constraints (e.g., series-parallel graphs and trees). Finally, we prove a lower bound of $d$ on the approximation ratio of any list scheduling scheme with local priority considerations. To the best of our knowledge, these are the first approximation results for moldable workflows with multiple resource requirements.

\keywords List scheduling, multiple resources, moldable jobs, precedence constraint, makespan, approximation ratio.
\endkeywords
\end{abstract}

\section{Introduction}

Many complex scientific workflows that are running in today's High-Performance Computing (HPC) systems can be modeled as Directed Acyclic Graphs (DAGs), where the nodes represent the constituent jobs of the workflows and the edges represent the precedence constraints or dependencies among the jobs. While HPC systems often rely on dynamic runtime schedulers, such as KAAPI~\cite{KAAPI}, StarPU~\cite{StarPU} or PaRSEC~\cite{PaRSEC}, to ensure the efficient execution of these workflows, most existing schedulers focus only on the management of the computational resources (i.e., computing nodes or cores).
However, many of today's scientific applications need to process large amounts of data, and thus require not only the computational resources but also strong data management supports. Indeed, modern HPC systems are equipped with more levels of memory/storage (e.g., NVRAMs, SSDs, burst buffers~\cite{BurstBuffer}), as well as more advanced architecture and software features (e.g., high-bandwidth memory \cite{KNL}, cache partitioning \cite{vCAT},  bandwidth reservation \cite{MemGuard}) to facilitate efficient data transfer. All of these different types of resources could potentially be partitioned among the concurrently running jobs and thus exploited by the runtime schedulers to improve the overall application performance and system utilization.

In this paper, we study multi-resource scheduling for a computational workflow that is comprised of a set of parallel jobs with DAG-based precedence constraints. The goal is to simultaneously explore the availability of multiple types of resources by designing effective scheduling solutions that minimize the overall completion time, or \emph{makespan}, of the workflow. We focus on parallel jobs that are \emph{moldable} \cite{Feitelson97}, which allows the scheduler to select a variable set of resources for a job, but once the job starts execution, the resource allocations cannot be changed. In contrast to \emph{rigid} jobs, whose resource allocations are all static and hence fixed, moldable jobs can easily adapt to the different amounts of available resources, while in contrast to \emph{malleable} jobs, whose resource allocations can be dynamically varied during runtime, moldable jobs are much easier to design and implement. Given these advantages, moldable jobs have been offered by many computational kernels in scientific libraries. Moreover, the moldable job model is also amenable to the resource allocation patterns currently supported by many different resource types (e.g., computing cores, memory blocks, cache lines).

As the considered multi-resource scheduling problem contains the single-resource problem as a special case, it is known to be strongly NP-complete \cite{Du89_NPcomplete}. Thus, we focus on designing good approximation algorithms. In contrast to the single-resource problem, however, the multi-resource problem needs to consider the combined effect of multiple types of resources on the execution time of the jobs, which poses additional challenges to the scheduling problem. By adopting a two-phase approach \cite{Turek92} widely used for scheduling moldable jobs, we design a multi-resource, list-based scheduling algorithm. In particular, our algorithm first computes an approximate resource allocation for all jobs on different resource types, and then applies an extended list scheduling scheme to schedule the jobs. As list scheduling is easy to implement, the proposed algorithm can be readily applied to practical systems.

We prove the following main results for a system consisting of $d$ types of schedulable resources, under reasonable assumptions on the job execution times and speedups:
\begin{compactitem}
\item An approximation ratio of $1.619d+2.545\sqrt{d}+1$ for any $d$, and a ratio of $d+O(\sqrt[3]{d^2})$ for large $d$;
\item Improved approximations for some special graphs (e.g., series-parallel graphs, trees and independent jobs) with ratios of $1.619d + 1$ for any $d$ and $d+O(\sqrt{d})$ for large $d$.
\item A lower bound of $d$ on the approximation ratio of any list scheduling scheme with local priority considerations.
\end{compactitem}

To the best of our knowledge, these are the first approximation results for moldable workflows with multiple resource requirements.
They also improve upon the $2d$-approximation previously shown in \cite{Sun18_multiresource} for independent moldable jobs. The results demonstrate that our algorithm essentially achieves the optimal asymptotic approximation up to the dominating factor (i.e., $d$) among the generic class of local list scheduling schemes, thus matching the same asymptotic performance for rigid \cite{Garey75_multiresource} and malleable \cite{He07_multiresource} jobs.
Altogether, these results lay the theoretical foundation for multi-resource scheduling of parallel workflows.

The rest of this paper is organized as follows. Section \ref{sec.related} reviews some related work on moldable and multi-resource scheduling. Section \ref{sec.models} formally introduces the scheduling model and derives a lower bound on the optimal makespan. Section \ref{sec.algorithms} presents our multi-resource scheduling algorithm and analyzes its approximation ratios for general job graphs. Section \ref{sec.improved} proves improved results for some special graphs, including series-parallel graphs, trees and independent jobs. Section \ref{sec.listlowerbound} shows a lower bound on the performance of local list scheduling schemes, and finally, Section \ref{sec.conclusion} concludes the paper and briefly discusses open questions.

\section{Related Work}\label{sec.related}

This section reviews some related work on scheduling moldable parallel jobs, as well as on multi-resource scheduling under different job models and objectives.

\subsection{Moldable Job Scheduling}
Scheduling moldable parallel jobs to minimize the makespan is strongly NP-hard on $P \ge 5$ processors \cite{Du89_NPcomplete}, and the problem has been extensively studied in the literature from the perspective of approximation algorithms. Most prior work, however, has focused on a single type of resource while assuming different speedup models for the jobs.

For scheduling independent moldable jobs with arbitrary speedups, Turek et al.~\cite{Turek92} presented a 2-approximation list-based algorithm and a 3-approximation algorithm based on building shelves. Ludwig and Tiwari~\cite{Ludwig94} later improved the 2-approximation result with lower computational complexity.
For monotonic jobs, whose execution time $t(p)$ is non-decreasing in the number $p$ of allocated processors and whose work function $w(p) = p\cdot t(p)$ is non-decreasing in $p$, Mouni\'{e} et al.~\cite{Mounie07_dualapprox} presented a $(1.5+\epsilon)$-approximation algorithm using dual approximation. Jansen and Land \cite{Jansen18_PTAS} showed a lower complexity algorithm that achieves the same $(1.5+\epsilon)$-approximation as well as a PTAS, when the execution time functions of the jobs admit compact encodings.

For scheduling moldable jobs with precedence constraints, Lep\`{e}re et al. \cite{Lepere01_DAG} presented a $5.236$-approximation algorithm for monotonic jobs.
Jansen and Zhang \cite{Jansen06_DAG} improved the approximation ratio to around 4.73 for the same model, and recently, Chen \cite{Chen18_DAG} further improved it to around 3.42 using an iterative approximation method. Additionally, better approximation results have been obtained for jobs with special dependency graphs (e.g., series-parallel graphs and trees \cite{Lepere01_DAG, Lepere02_trees}) or special speedup models (e.g., concave speedup \cite{Jansen05_concave, Chen13_concave} and roofline speedup \cite{Wang92_DAG, Feldmann98_DAG}).

\subsection{Multi-Resource Scheduling}

Some approximation algorithms have been proposed on multi-resource scheduling to minimize makespan under different parallel job models.

Garey and Graham \cite{Garey75_multiresource} considered scheduling $n$ sequential jobs on $m$ identical machines with $d$ additional types of resources.
Further, each job has a fixed resource requirement from each resource type, making it essentially a \emph{rigid} job scheduling model. They presented a list-scheduling algorithm and proved three results: (1) an $m$-approximation for jobs with precedence constraints and when there is only one type of resource, i.e., $d=1$; (2) a $(d+1)$-approximation for independent jobs and when the number of machines is not a constraining factor, i.e., $m\ge n$; (3) a $(d+2-\frac{2d+1}{m})$-approximation for independent jobs with any $m\ge 2$. For the case of $d=1$, Demirci et al. \cite{Demirci18_resource} presented an improved $O(\log n)$-approximation for jobs with precedence constraints, and Niemeier and Wiese \cite{Niemeier12_resource} presented an improved $(2+\epsilon)$-approximation for independent jobs.

He et al.~\cite{He07_multiresource, He11_MQB} considered parallel jobs that are represented as direct acyclic graphs (DAGs) consisting of unit-size tasks, each of which requests a single type of resource from a total of $d$ resource types. Further, the amount of resources allocated to a job can be dynamically changed during runtime, making it essentially a \emph{malleable} job scheduling model. They showed that list scheduling achieves $(d+1)$-approximation for this model. Shmoys et al. \cite{Shmoys94_dagshop} considered a similar model while further restricting the tasks of each job to be processed sequentially. They called it the \emph{DAG-shop} scheduling model, and presented a polylog approximation result in number of machines and job length.

Sun et al.~\cite{Sun18_multiresource} considered scheduling independent \emph{moldable} jobs on $d$ types of resources. They presented a $2d$-approximation list-based algorithm and a $(2d+1)$-approximation shelf-based algorithm, thus generalizing the single-resource results in \cite{Turek92}. They also presented a technique to transform any $c$-approximation algorithm for a single resource type to a $cd$-approximation algorithm for $d$ types of resources. This work is the closest to ours, while we consider moldable jobs with precedence constraints. When jobs are independent, our main approximation result also improves the one in \cite{Sun18_multiresource} for a large number of resource types.

Beaumont et al.~\cite{Beaumont18_tworesource} and Eyraud-Dubois and Kumar \cite{Eyraud-Dubois20_tworesource} considered scheduling sequential jobs on two alternative types of resources (CPU and GPU) to minimize the makespan. In their model, each job can be chosen to execute on either resource type with different processing rates. They analyzed an approximation algorithm, called HeteroPrio, for both independent jobs and jobs with precedence constraints. The approximation ratios depend on the relative amount of resources in the two resource types. A recent survey on this \emph{alternative-resource} scheduling model can also be found in \cite{Beaumont20_survey}.

Additionally, some prior works have studied heuristic algorithms under various multi-resource scheduling models or objectives. Ghodsi et al.~\cite{Ghodsi11_DRF} focused on the objective of resource allocation fairness in a multi-user setting. They proposed the Dominant Resource Fairness (DRF) algorithm that aims at maximizing the minimum dominant share across all users.
Grandl et al.~\cite{Grandl_14_MPC} considered scheduling malleable jobs under four specific resource types (CPU, memory, disk and network). They designed a heuristic algorithm, called Tetris, that schedules jobs by considering the correlation between the job's peak resource demands and the machine's resource availabilities, with the goal of minimizing resource fragmentation.
NoroozOliaee et al.~\cite{NoroozOliaee14_Online} studied a similar problem but with two resources only (CPU and memory). They showed that a simple scheduling heuristic that uses Best Fit and Shortest Job First delivers good performance in terms of resource utilization and job queueing delays.

\section{Models}\label{sec.models}

This section presents the multi-resource scheduling model, gives a formal statement of the problem, and derives a lower bound on the optimal schedule.

\subsection{Scheduling Model}

We consider the problem of scheduling a set of $n$ moldable jobs on $d$ distinct types of resources (e.g., processor, memory, cache). Each resource type $i$ has a total amount $P^{(i)}$ of available resource. The jobs are \emph{moldable}, i.e., they can be executed using different amounts of resources from each resource type, but the resource usage cannot be changed once a job has started executing. For each job $j$, its execution time $\exectime{j}(\resources{j})$ depends on the \emph{resource allocation} $\resources{j} = (\resource{j}{1}, \resource{j}{2}, \cdots, \resource{j}{d})$, which specifies the amount of resource $\resource{j}{i} \ge 0$ allocated to the job for each resource type $i = 1, 2, \dots, d$.
We make the following reasonable assumptions on the resource allocation and execution time of the jobs.

\begin{assumption}[Integral Resources]
All resource allocations $\resource{j}{i}$'s for the jobs and the total amount of resources $\maxresource{i}$'s for all resource types are integers.
\end{assumption}

This is a natural assumption for discrete resources, such as processors. Other resource types, such as memory or cache, are typically allocated in discrete chunks as well (e.g., memory blocks, cache lines) in practical systems.

\begin{assumption}[Known Execution Times]
For each job $j$, its execution time function $\exectime{j}(\resources{j})$ is known for every possible resource allocation $p_j$.
\end{assumption}

In practice, the execution time function of an application could be obtained through one or more of the following approaches: application modeling or profiling, performance prediction or interpolation from historic data. Here, we are not concerned about how such a function is obtained.

\begin{assumption}[Monotonic Jobs]\label{assume.monotonic}
Given two resource allocations $\resources{j}$ and $q_{j}$ for a job $j$, we say that $\resources{j}$ is \emph{at most} $q_{j}$, denoted by $\resources{j} \preceq q_{j}$, if $\resource{j}{i} \le q_j^{(i)}$ for all $1\le i\le d$. The execution times of the job under these two allocations satisfy:
$$\exectime{j}(q_{j}) \le \exectime{j}(p_{j}) \leq \Big(\max_{i=1\dots d} q_{j}^{(i)}/p_{j}^{(i)}\Big) \cdot \exectime{j}(q_{j}) \ . $$
\end{assumption}

This generalizes the monotonic job assumption under a single resource type \cite{Lepere01_DAG, Mounie07_dualapprox}, which has been observed for many real-world applications. In particular, the first inequality specifies that the execution time of a job
is \emph{non-increasing} in the amount of resource allocated to the job\footnote{This assumption, however, is not restrictive, as we can discard any allocation that uses more resource than another allocation but results in a higher job execution time.}, and the second inequality restricts the job to have \emph{non-superlinear} speedup with respect to any resource type\footnote{Some parallel applications can achieve superlinear speedups with a combined effect of increased allocations in two or more resource types (e.g., the \emph{cache effect} \cite{Superlinear} when increasing both processor and cache allocations). We do not consider such superlinear speedup model in this paper.}. Note that we do not make any assumptions on a job $j$'s relative execution times under two resource allocations $\resources{j}$ and $q_{j}$ that are \emph{non-comparable}, i.e., $\resources{j} \npreceq q_{j}$ and $q_{j} \npreceq \resources{j}$.

Additionally, a set of \emph{precedence constraints} is specified for the jobs, which form a directed acyclic graph (DAG), $G = (V, E)$. Each node $j \in V$ in the graph represents a job and a directed edge $(j_1 \rightarrow j_2) \in E$ requires that job $j_2$ cannot start executing until the completion of job $j_1$. In this case, $j_1$ is called an immediate \emph{predecessor} of $j_2$, and $j_2$ is called an immediate \emph{successor} of $j_1$.

\subsection{Problem Statement}

The objective is to find a schedule for the jobs to minimize the maximum completion time, or the makespan. Specifically, a schedule is defined by the following two decisions:
\begin{itemize}
\item \emph{Resource allocation decision}: $\allresources = (\resources{1}, \resources{2}, \dots, \resources{n})$;
\item \emph{Starting time decision}: $\allstarttimes = (s_{1}, s_{2}, \dots, s_{n})$.
\end{itemize}

Given a pair of scheduling decisions $\allresources$ and $\allstarttimes$, the completion time of a job $j$ is defined as $c_j = s_j + \exectime{j}(\resources{j})$, and the makespan of the jobs is given by $T = \max_{j} c_j$. A schedule is \emph{valid} if it respects the following constraints:
\begin{itemize}
\item For each resource type $i$, the amount of resource utilized by all running jobs at any time does not exceed the total amount $\maxresource{i}$ of available resource;
\item If two jobs $j_1$ and $j_2$ have a precedence constraint, i.e., $j_1 \rightarrow j_2$, then the starting time of $j_2$ is no earlier than the completion time of $j_1$, i.e., $s_{j_2} \ge c_{j_1}$.
\end{itemize}

The above multi-resource scheduling problem is clearly NP-complete, as it contains the single-resource scheduling problem \cite{Jansen06_DAG, Lepere01_DAG} as a special case. Thus, we aim at designing approximation algorithms with bounded performance guarantees. An algorithm is said to be \emph{$r$-approximation} if its makespan satisfies $\frac{T}{T_{\opt}} \le r$ for any set of jobs, where $T_{\opt}$ denotes the optimal makespan.

\subsection{Lower Bound on Optimal Makespan}

We now derive a lower bound on the optimal makespan. To that end, we define the following concepts given a resource allocation decision $\allresources = (\resources{1}, \resources{1}, \dots, \resources{n})$ for the jobs.

\begin{definition}
For each job $j$:
\begin{itemize}
\item $w_j^{(i)}(\resources{j}) \!=\! p_j^{(i)}\cdot \exectime{j}(\resources{j})$\emph{:} work on resource type $i$;
\item $a_j^{(i)}(p_j) \!=\! \frac{w_j^{(i)}(\resources{j})}{P^{(i)}}$\emph{:} area (or normalized work) on resource type $i$;
\item $a_j(p_j) \!=\! \frac{1}{d}\sum_{i=1}^{d} a_j^{(i)}(p_j)$\emph{:} average area over all resource types.
\end{itemize}
\end{definition}

\begin{definition}
For the set of jobs:
\begin{itemize}
\item $W^{(i)}(\allresources) \!=\! \sum_{j=1}^{n} w_j^{(i)}(\resources{j})$\emph{:} total work on resource type $i$;
\item $A^{(i)}(\allresources) \!=\! \frac{W^{(i)}(\allresources)}{P^{(i)}} = \sum_{j=1}^{n} a_j^{(i)}(p_j)$\emph{:} total area on resource type $i$;
\item $A(\allresources) \!=\! \frac{1}{d}\sum_{i=1}^{d} A^{(i)}(\allresources) \!=\! \sum_{j=1}^{n} a_j(p_j)$\emph{:} average total area over all resource types;
\item $C(\allresources, f) \!=\! \sum_{j \in f} \exectime{j}(\resources{j})$\emph{:} total execution time of all the jobs along a particular path $f$ in the graph\footnote{A path is a sequence of jobs with linear precedence, i.e., $f = (j_{\pi(1)} \rightarrow j_{\pi(2)} \rightarrow \dots \rightarrow j_{\pi(v)})$, where the first job $j_{\pi(1)}$ does not have any predecessor in the graph and the last job $j_{\pi(v)}$ does not have any successor. };
\item $C(\allresources) \!=\! \max_{f} C(\allresources, f)$\emph{:} critical path length, i.e., total execution time of the jobs along a critical (longest) path in the graph;
\item $L(\allresources)=\max(A(\allresources), C(\allresources))$\emph{:} maximum of average total area $A(\allresources)$ and critical path length $C(\allresources)$.
\end{itemize}
\end{definition}

We further define $L_{\min} = \min_{\allresources}L(\allresources)$ to be the minimum value of $L(\allresources)$ among all possible resource allocations, and let $\allresources^*$ denote a resource allocation such that $L(\allresources^*) = L_{\min}$. The following lemma shows that $L_{\min}$ serves as a lower bound on the optimal makespan.

\begin{lemma}\label{lem.makespan_lb}
$T_{\opt} \ge L_{\min}$.
\end{lemma}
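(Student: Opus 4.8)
The plan is to show that any valid schedule, in particular an optimal one, has makespan at least $L(\allresources)$ for whatever resource allocation $\allresources$ it happens to use, and then conclude by minimality of $L_{\min}$. So fix an optimal schedule, and let $\allresources$ denote the resource allocation it assigns to the jobs (the optimal schedule certainly makes \emph{some} allocation decision, so this is well-defined). It suffices to prove $T_{\opt} \ge A(\allresources)$ and $T_{\opt} \ge C(\allresources)$; since $L(\allresources) = \max(A(\allresources), C(\allresources)) \ge L_{\min}$, this gives $T_{\opt} \ge L_{\min}$.

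First I would handle the critical path bound $T_{\opt} \ge C(\allresources)$. Take a path $f = (j_{\pi(1)} \rightarrow \dots \rightarrow j_{\pi(v)})$ achieving $C(\allresources) = C(\allresources, f)$. By the precedence validity constraint, consecutive jobs on the path satisfy $s_{j_{\pi(k+1)}} \ge c_{j_{\pi(k)}} = s_{j_{\pi(k)}} + \exectime{j_{\pi(k)}}(\resources{j_{\pi(k)}})$. Summing (telescoping) along the path gives $c_{j_{\pi(v)}} \ge s_{j_{\pi(1)}} + \sum_{k=1}^{v} \exectime{j_{\pi(k)}}(\resources{j_{\pi(k)}}) \ge \sum_{j \in f}\exectime{j}(\resources{j}) = C(\allresources)$, using $s_{j_{\pi(1)}} \ge 0$. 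Since $T_{\opt} = \max_j c_j \ge c_{j_{\pi(v)}}$, we get $T_{\opt} \ge C(\allresources)$.

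Next the area bound $T_{\opt} \ge A(\allresources)$. Fix any resource type $i$. Consider the function $u_i(\tau)$ equal to the total amount of resource $i$ in use at time $\tau$ under the optimal schedule; by the resource-capacity validity constraint, $u_i(\tau) \le \maxresource{i}$ for all $\tau$, and $u_i(\tau) = 0$ for $\tau \notin [0, T_{\opt}]$. Integrating over time and swapping the order of summation, $\int_0^{T_{\opt}} u_i(\tau)\, d\tau = \sum_{j=1}^{n} p_j^{(i)} \exectime{j}(\resources{j}) = W^{(i)}(\allresources)$, because each job $j$ contributes $p_j^{(i)}$ to the integrand over a time window of length $\exectime{j}(\resources{j})$. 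On the other hand $\int_0^{T_{\opt}} u_i(\tau)\, d\tau \le \maxresource{i} \cdot T_{\opt}$, so $T_{\opt} \ge W^{(i)}(\allresources)/\maxresource{i} = A^{(i)}(\allresources)$. Averaging this inequality over $i = 1, \dots, d$ yields $T_{\opt} \ge \frac{1}{d}\sum_{i=1}^d A^{(i)}(\allresources) = A(\allresources)$.

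Combining the two bounds, $T_{\opt} \ge \max(A(\allresources), C(\allresources)) = L(\allresources) \ge \min_{\allresources'} L(\allresources') = L_{\min}$, which completes the proof. I do not anticipate a genuine obstacle here — this is a standard lower-bound argument — but the one point requiring a little care is to be explicit that $\allresources$ is the \emph{optimal} schedule's own allocation rather than an arbitrary one, so that the averaging step for $A(\allresources)$ and the telescoping step for $C(\allresources)$ both refer to the same execution times $\exectime{j}(\resources{j})$ that actually appear in the optimal schedule; monotonicity (Assumption \ref{assume.monotonic}) is not needed for this lemma.
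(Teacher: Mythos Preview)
Your proof is correct and follows essentially the same approach as the paper's own proof: both establish $T_{\opt} \ge C(\allresources_{\opt})$ via the sequential execution of jobs on a critical path and $T_{\opt} \ge A(\allresources_{\opt})$ via the resource-capacity constraint, then conclude by $L(\allresources_{\opt}) \ge L_{\min}$. The only cosmetic difference is that you spell out the telescoping and the time-integration explicitly, whereas the paper states the critical-path bound as ``trivial'' and packages the area argument as the inequality $\sum_{j} w_j^{(i)}(\resources{j}) \le \maxresource{i} \cdot T$.
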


\begin{proof}
We first show that, given any resource allocation $\allresources$, the makespan produced by any schedule must satisfy $T \ge \max(A(\allresources), C(\allresources))$. The bound $T \ge C(\allresources)$ is trivial, since the jobs along the critical path must be executed sequentially, so the makespan is at least $C(\allresources)$. To derive the bound $T \ge A(\allresources)$, we observe that the average total area $A(\allresources)$ in any valid schedule with makespan $T$ must satisfy:
\begin{align*}
A(\allresources) &= \frac{1}{d}\sum_{i=1}^{d}\sum_{j=1}^{n} \frac{w_j^{(i)}(\resources{j})}{\maxresource{i}} \\
&= \frac{1}{d}\sum_{i=1}^{d} \frac{1}{\maxresource{i}} \sum_{j=1}^{n} w_j^{(i)}(\resources{j}) \\
&\le \frac{1}{d}\sum_{i=1}^{d} \frac{1}{\maxresource{i}} \cdot (\maxresource{i} \cdot T) = T \ .
\end{align*}
The inequality $\sum_{j=1}^{n} w_j^{(i)}(\resources{j}) \le \maxresource{i} \cdot T$ is because $\maxresource{i} \cdot T$ is the maximum amount of work that can be allocated to the jobs within time $T$ on any resource type $i$ with total amount of resource $P^{(i)}$.

Suppose the optimal schedule uses a resource allocation $\allresources_{\opt}$. Then, its makespan must satisfy:
$$T_{\opt} \ge \max\big(A(\allresources_{\opt}), C(\allresources_{\opt}) \big) = L(\allresources_{\opt}) \ge L(\allresources^*) = L_{\min}.$$
The last inequality is because $L(\allresources^*)$ is the minimum $L(\allresources)$ among all possible resource allocations, including $\allresources_{\opt}$.
\end{proof}

\section{A Multi-Resource Scheduling Algorithm and Approximation Results}\label{sec.algorithms}

In this section, we present a multi-resource scheduling algorithm and analyze its approximation ratio for general DAGs. The algorithm adopts the \emph{two-phase approach} that has been widely used for scheduling moldable jobs on a single type of resource \cite{Turek92, Lepere01_DAG, Jansen06_DAG}.
\subsection{Phase 1: Resource Allocation}\label{sec.allocation}

\subsubsection{Discrete Time-Cost Tradeoff (DTCT) Problem} To allocate resources for the jobs, we consider a relevant discrete time-cost tradeoff problem \cite{De95_tradeoff}, which has been studied in the literature of operations research and project management.

\begin{definition}[Discrete Time-Cost Tradeoff (DTCT)]\label{def.DTCT}
Suppose a project consists of $n$ precedence-constrained tasks. Each task $j$ can be executed using several different alternatives and each alternative $i$ takes time $t_{j, i}$ and has cost $c_{j, i}$. Further, for any two alternatives $i_1$ and $i_2$, if $i_1$ is faster than $i_2$, then $i_1$ is more costly than $i_2$, i.e.,
\begin{align}\label{eq.time_cost}
t_{j, i_1} \le t_{j, i_2} \Rightarrow c_{j, i_1} \ge c_{j, i_2} \ .
\end{align}
Given a project realization $\sigma$ that specifies which alternative is chosen for each task, the total project duration $D(\sigma)$ is defined as the sum of times of the tasks along the critical path, and the total cost $B(\sigma)$ is defined as the sum of costs of all tasks. The objective is to find a realization $\sigma^*$ that minimizes the total project duration $D(\sigma^*)$ and the total cost $B(\sigma^*)$.
\end{definition}

The above DTCT problem is obviously bicriteria, and a tradeoff exists between the total project duration and the total cost. Two problem variants have been commonly studied, both of which are shown to be NP-complete \cite{De97_tradeoffcomplexity}:
\begin{itemize}
\item \emph{Budget Problem}: Given a total cost budget $B$, minimize the project duration $D(\sigma)$ subject to $B(\sigma)\le B$;
\item \emph{Deadline Problem}: Given a project deadline $D$, minimize the total cost $B(\sigma)$ subject to $D(\sigma)\le D$.
\end{itemize}

For both problems, Skutella \cite{Skutella98_tradeoff} presented a polynomial-time algorithm, which, given any feasible budget-deadline pair $(B, D)$, finds a realization $\sigma$ for the project that satisfies: $D(\sigma) \le \frac{D}{\rho}$ and $B(\sigma) \le \frac{B}{1-\rho}$, for any $\rho \in (0, 1)$.\footnote{In essence, this bicriteria approximation algorithm first transforms each task of the project into a set of virtual tasks, and then constructs a relaxed linear program (LP) for the transformed problem. The relaxed LP either minimizes $D(\sigma)$ subject $B(\sigma) \le B$ or minimizes $B(\sigma)$ subject $D(\sigma) \le D$. In either case, the result can be obtained by rounding the optimal fractional solution to the relaxed LP based on the parameter $\rho$. }
\smallskip

\subsubsection{Allocating Resources to Jobs} We transform our resource allocation problem to the DTCT problem and solve it using the approximation result in \cite{Skutella98_tradeoff}. To that end, a task $j$ is created for each job $j$ in the graph, with the set of alternatives for the task corresponding to the set of resource allocations for the job. The execution time $t_{j, i}$ of task $j$ with alternative $i$ is then defined as the execution time $\exectime{j}(\resources{j})$ of job $j$ with the corresponding resource allocation $\resources{j}$, and the cost $c_{j, i}$ is defined as the average area $a_j(\resources{j})$.

Let $\mathcal{S}$ denote the set of all $Q = \prod_{i=1}^{d} P^{(i)}$ possible resource allocations for a job. To ensure that Condition (\ref{eq.time_cost}) in Definition \ref{def.DTCT} is satisfied, we discard, for each job $j$, the subset $\mathcal{D}_j \subset \mathcal{S}$ of \emph{dominated} allocations, which is defined as:
\begin{align}\label{eq.dominated}
\mathcal{D}_j \!=\! \{\resources{j} \mid \exists q_{j}, \exectime{j}(q_{j}) < \exectime{j}(\resources{j}) \text{ and } a_j(q_{j}) < a_j(\resources{j})\} \ ,
\end{align}
and only use the remaining set of \emph{non-dominated} allocations, denoted by $\mathcal{N}_j = \mathcal{S}\backslash \mathcal{D}_j$, to create the alternatives of the task.
Thus, a realization $\sigma$ for the project corresponds to a resource allocation decision $\allresources$ for the jobs. The total project duration $D(\sigma)$ then corresponds to the total execution time $C(\allresources)$ of the jobs, and the total cost $B(\sigma)$ corresponds to the average total area $A(\allresources)$.

A resource allocation decision $\allresources = (\resources{1}, \resources{2}, \dots, \resources{n})$ is said to be \emph{non-dominated} if the allocation for every job is non-dominated, i.e., $\resources{j} \in \mathcal{N}_j$ for all $j=1,\dots,n$. The following lemma shows that the minimum makespan lower bound $L_{\min}$ can be achieved by a non-dominated resource allocation.

\begin{lemma}\label{lem.non-dominate}
There exists a non-dominated resource allocation $\allresources^{*} = (\resources{1}^*, \resources{2}^*, \dots, \resources{n}^*)$ that achieves
$L(\allresources^*) = L_{\min}$.
\end{lemma}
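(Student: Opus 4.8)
The plan is to start from an arbitrary resource allocation $\allresources$ achieving $L(\allresources) = L_{\min}$ (which exists by definition of $L_{\min}$ and Lemma \ref{lem.makespan_lb}'s surrounding discussion) and transform it, job by job, into a non-dominated allocation $\allresources^*$ without increasing $L$. The key observation is monotonicity of $L$ under replacing a single job's allocation by a ``better'' one: if for some job $j$ we replace $\resources{j}$ by $q_j$ with $\exectime{j}(q_j) \le \exectime{j}(\resources{j})$ and $a_j(q_j) \le a_j(\resources{j})$, then the critical path length $C$ cannot increase (every path sum is non-increasing, since $j$'s execution time only appears with a smaller or equal value, and all other terms are unchanged), and the average total area $A$ cannot increase (it is a sum over jobs of the $a_j$ values, only one of which changed, and it went down or stayed the same). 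Hence $L = \max(A, C)$ does not increase.

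The main step is then: for each job $j$ whose current allocation $\resources{j}$ lies in the dominated set $\mathcal{D}_j$, pick a witness $q_j$ with $\exectime{j}(q_j) < \exectime{j}(\resources{j})$ and $a_j(q_j) < a_j(\resources{j})$, and replace $\resources{j}$ by $q_j$. A subtlety is that $q_j$ itself might still be dominated, so I would instead choose, among all allocations in $\mathcal{S}$ that are ``no worse'' than $\resources{j}$ in both coordinates, one that is minimal with respect to the natural componentwise partial order on the pair $(\exectime{j}(\cdot), a_j(\cdot))$ — for instance, first minimize $\exectime{j}$, then among those minimize $a_j$ (or use any fixed tie-breaking total order). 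Since $\mathcal{S}$ is finite, such a minimal element exists, and by construction it cannot be dominated: if some $q'_j$ had both $\exectime{j}(q'_j) < \exectime{j}(q_j)$ and $a_j(q'_j) < a_j(q_j)$, then $q'_j$ would contradict the minimality of our chosen $q_j$. Doing this independently for every job yields an allocation $\allresources^*$ with $\resources{j}^* \in \mathcal{N}_j$ for all $j$, and by the monotonicity argument applied once per job, $L(\allresources^*) \le L(\allresources) = L_{\min}$; combined with $L(\allresources^*) \ge L_{\min}$ by minimality of $L_{\min}$, we get equality.

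The part requiring a little care is the claim that the chosen replacement is actually non-dominated, i.e., that minimality in the chosen total order rules out membership in $\mathcal{D}_j$: one must check that an allocation strictly better in \emph{both} coordinates would necessarily come earlier in the order, which holds for the ``lexicographic then tie-break'' order precisely because strict improvement in both coordinates implies strict improvement in the lexicographically first coordinate or a tie there followed by improvement in the second. I expect this to be the only genuinely non-routine point; the monotonicity of $C$ and $A$ under single-job improvements is immediate from their definitions as, respectively, a maximum of path-sums and a sum of per-job areas.
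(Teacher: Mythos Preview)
Your proposal is correct and follows essentially the same route as the paper: start from any minimizer of $L$, then for each job with a dominated allocation replace it by one that is no worse in both $\exectime{j}$ and $a_j$ and is itself non-dominated, using that $A$ is a sum over jobs and $C$ a max over path-sums to conclude $L$ does not increase. The only difference is cosmetic: the paper simply asserts the existence of a non-dominated allocation dominating the current one (implicitly relying on finiteness of $\mathcal{S}$), whereas you spell this out via a lexicographic-minimality argument.
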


\begin{proof}
Consider any resource allocation $\mathbf{q}^*=(q^*_{1},q^*_{2},\dots,q^*_{n})$ that achieves $L(\mathbf{q}^*) = L_{\min}$, and suppose it contains a dominated allocation $q^*_{j}\in \mathcal{D}_j$ for a job $j$. Then, by replacing $q^*_{j}$ with a non-dominated allocation $q'^{*}_{j} \in \mathcal{N}_j$ that dominates $q^*_{j}$, i.e., $\exectime{j}(q'^{*}_j) < \exectime{j}(q^*_{j})$ and $a_j(q'^{*}_j) < a_j(q^*_{j})$, we get a new resource allocation $\mathbf{q'}^*=(q^*_{1},\dots,q^*_{j-1},q'^{*}_j,q^*_{j+1},\dots,q^*_{n})$, which satisfies $A(\mathbf{q'}^*) < A(\mathbf{q}^*)$ and $C(\mathbf{q'}^*) \le C(\mathbf{q}^*)$. This implies $L(\mathbf{q'}^*) \le L(\mathbf{q}^*) = L_{\min}$. Repeating the process above for every job with a dominated allocation results in an overall non-dominated allocation $\mathbf{p}^*$ and proves the lemma.
\end{proof}

We can now find a resource allocation $\mathbf{p'}$ for the jobs (or equivalently a realization $\sigma'$ in the corresponding DTCT problem), with the following property.
\begin{lemma}\label{lem.allocation}
For any $\rho \in (0, 1)$, a resource allocation $\mathbf{p'} = (p'_1, p'_2,\dots,p'_n)$ can be found in polynomial time that satisfies:
\begin{align}
C(\mathbf{p'}) &\le \frac{T_{\opt}}{\rho} \ , \label{eq.Cbound}\\
A(\mathbf{p'}) &\le \frac{T_{\opt}}{1-\rho} \  . \label{eq.Abound}
\vspace{-0.2in}
\end{align}
\end{lemma}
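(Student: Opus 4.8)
The plan is to combine the two facts already in hand — the makespan lower bound and the existence of an optimal non-dominated allocation — with Skutella's bicriteria algorithm for the DTCT problem. By Lemma~\ref{lem.non-dominate} there is a non-dominated allocation $\allresources^{*}$ with $L(\allresources^{*}) = L_{\min}$, and by Lemma~\ref{lem.makespan_lb} we have $L_{\min} \le T_{\opt}$. Since $L(\allresources^{*}) = \max\big(A(\allresources^{*}), C(\allresources^{*})\big)$, this yields the two inequalities $A(\allresources^{*}) \le T_{\opt}$ and $C(\allresources^{*}) \le T_{\opt}$ at once. Passing through the reduction of Section~\ref{sec.allocation}, the realization $\sigma^{*}$ of the constructed DTCT instance corresponding to $\allresources^{*}$ has duration $D(\sigma^{*}) = C(\allresources^{*}) \le T_{\opt}$ and cost $B(\sigma^{*}) = A(\allresources^{*}) \le T_{\opt}$; equivalently, the budget--deadline pair $(B, D) = (T_{\opt}, T_{\opt})$ is feasible for that instance.

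The next step is to invoke Skutella's algorithm \cite{Skutella98_tradeoff} on this feasible pair with the prescribed $\rho \in (0,1)$, obtaining in polynomial time a realization $\sigma'$ with $D(\sigma') \le T_{\opt}/\rho$ and $B(\sigma') \le T_{\opt}/(1-\rho)$; reading $\sigma'$ back as an allocation $\mathbf{p'}$ and using $C(\mathbf{p'}) = D(\sigma')$ and $A(\mathbf{p'}) = B(\sigma')$ gives exactly (\ref{eq.Cbound}) and (\ref{eq.Abound}). Before invoking the black box I would check that the reduction really produces a legal DTCT instance: each job contributes at most $|\mathcal{N}_j| \le Q$ alternatives, so the instance has size polynomial in the input (which specifies the execution time of all $Q$ allocations of each job); and after discarding $\mathcal{D}_j$ — supplemented, if necessary, by a tie-break that keeps only a minimum-area allocation among those sharing the same execution time — the monotonicity requirement (\ref{eq.time_cost}) holds verbatim.

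I expect the one genuine obstacle to be that the algorithm does not know $T_{\opt}$ and so cannot literally feed $(T_{\opt}, T_{\opt})$ to Skutella's procedure. I would get around this by a binary search on a single threshold $L'$: for a candidate $L'$, run Skutella's algorithm on the pair $(L', L')$; the procedure succeeds precisely when $L'$ is at least some threshold value, and that threshold is at most $L_{\min} \le T_{\opt}$ because the pair $(L_{\min}, L_{\min})$ is already feasible (witnessed by $\sigma^{*}$). Taking the smallest $L'$ for which the algorithm succeeds therefore returns an allocation $\mathbf{p'}$ with $C(\mathbf{p'}) \le L'/\rho \le T_{\opt}/\rho$ and $A(\mathbf{p'}) \le L'/(1-\rho) \le T_{\opt}/(1-\rho)$, and the search halts after polynomially many rounds since the threshold is a rational number of polynomially bounded encoding length (a partial sum of input data, or the optimum of a polynomial-size linear program). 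If the black-box statement is read as permitting direct access to Skutella's LP, the search can be skipped altogether: solve $\min\{L : D(\sigma) \le L,\ B(\sigma) \le L,\ \sigma \text{ a fractional realization}\}$, whose optimum $L_{\mathrm{LP}} \le L_{\min} \le T_{\opt}$, and apply Skutella's $\rho$-rounding to the resulting fractional solution. The rest is routine translation between the two problems, already set up in Section~\ref{sec.allocation}.
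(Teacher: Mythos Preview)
Your proposal is correct and matches the paper's own proof sketch: both combine Lemmas~\ref{lem.makespan_lb} and~\ref{lem.non-dominate} with the DTCT reduction and Skutella's $\rho$-rounding, and both deal with the unknown $T_{\opt}$ by working with a surrogate bound $L\le L_{\min}\le T_{\opt}$. The paper adopts precisely the LP variant you describe at the end (minimize $L$ subject to $C\le L$, $A\le L$ over fractional realizations, then round), citing~\cite{Jansen06_DAG} for that device; your binary-search alternative is a valid substitute but is not the route the paper takes.
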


\begin{proofsketch}
The result can be obtained by adapting the algorithm in \cite{Skutella98_tradeoff}, which minimizes the project duration (or total cost) subject to a known budget $B$ (or deadline $D$) for the DTCT problem. Without knowing the value of this constraint a priori, we can still achieve the same approximations by adopting the technique used in \cite{Jansen06_DAG} for the problem with a single resource type.
Specifically, the relaxed LP originally formulated in \cite{Skutella98_tradeoff} can be modified and applied to our problem as follows: minimize the lower bound $L(\mathbf{p})$ instead, subject to two additional constraints $C(\mathbf{p}) \le L(\mathbf{p})$ and $A(\mathbf{p}) \le L(\mathbf{p})$. Then, by rounding the optimal fractional solution $\mathbf{\bar{p}}^*$ to this modified LP, we can get a resource allocation $\mathbf{p}'$ that satisfies: $C(\mathbf{p'}) \le \frac{C(\mathbf{\bar{p}}^*)}{\rho} \le \frac{L(\mathbf{\bar{p}}^*)}{\rho}$ and $A(\mathbf{p'}) \le \frac{A(\mathbf{\bar{p}}^*)}{1-\rho} \le \frac{L(\mathbf{\bar{p}}^*)}{1-\rho}$. Since the optimal fractional solution $\mathbf{\bar{p}}^*$ must result in an objective not greater than the one achieved by any (non-dominated) integral solution $\mathbf{p}^*$, and based on Lemma \ref{lem.non-dominate}, we have $L(\mathbf{\bar{p}}^*) \le L(\mathbf{p}^*) = L_{\min}$.
The result then directly follows by applying the makespan lower bound in Lemma \ref{lem.makespan_lb}.
\end{proofsketch}

\subsubsection{Adjusting Resource Allocation}
Lastly, we adjust the resource allocation $\mathbf{p'}$ (obtained above with a value of $\rho$ to be determined later) to get the final resource allocation $\mathbf{p}$ for the jobs. The aim is to limit the maximum resource utilization of any job under any resource type, thus facilitating more efficient list scheduling (see Section \ref{sec.scheduling}).
As with the case for a single type of resource \cite{Lepere01_DAG, Jansen06_DAG}, we choose a parameter $\mu \in (0, 0.5)$, whose value will also be determined later, and define the resource allocation for each job $j$ on each resource type $i$ as follows:
\begin{align}\label{eq.adjust}
p_j^{(i)} = \begin{cases}
\lceil \mu P^{(i)} \rceil,  & \text{~if~} {p'}_j^{(i)} > \lceil \mu P^{(i)} \rceil \\
{p'}_j^{(i)}, & \text{~otherwise~}
\end{cases}
\end{align}
where ${p'}_j^{(i)}$ is the corresponding resource allocation in $\mathbf{p'}$. The $p_j^{(i)}$'s will then form the final resource allocation $\mathbf{p}$.

A job $j$ is said to be \emph{adjusted} if its final resource allocation $p_j$ is reduced from the initial allocation $p'_j$ in any resource type; otherwise, the job is said to be \emph{unadjusted}. The following lemma shows the properties of any adjusted job.
\begin{lemma}\label{lem.adjust}
For any adjusted job $j$, its execution time satisfies:
\begin{align}\label{eq.tj}
t_j(p_j) \le \frac{t_j(p'_j)}{\mu} \ ,
\end{align}
and its area on any resource type $i$ is bounded by:
\begin{align}\label{eq.aj}
a_j^{(i)}(p_j) \le d\cdot a_j(p'_j) \ ,
\end{align}
if the total amount of resource type $i$ satisfies $P^{(i)} \ge \frac{1}{\mu^2}$.
\end{lemma}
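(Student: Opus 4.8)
The plan is to analyze the two parts of Lemma~\ref{lem.adjust} separately, since the bound on execution time follows almost directly from the monotonic job assumption while the bound on the area requires the extra hypothesis $\maxresource{i} \ge \frac{1}{\mu^2}$ and a more careful comparison of the adjusted and unadjusted areas.

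\textbf{Bounding the execution time.} First I would observe that by construction in~(\ref{eq.adjust}), the final allocation satisfies $\resources{j} \preceq p'_j$ componentwise, and that for every resource type $i$ we have $p'^{(i)}_j / p_j^{(i)} \le P^{(i)} / \lceil \mu P^{(i)} \rceil \le 1/\mu$: indeed, either the allocation was not reduced on type $i$ (ratio $1$) or $p_j^{(i)} = \lceil \mu P^{(i)} \rceil \ge \mu P^{(i)}$, while $p'^{(i)}_j \le P^{(i)}$. Then I would apply the second inequality of Assumption~\ref{assume.monotonic} with $q_j = p'_j$, giving $t_j(p_j) \le \big(\max_i p'^{(i)}_j / p_j^{(i)}\big) \cdot t_j(p'_j) \le t_j(p'_j)/\mu$, which is exactly~(\ref{eq.tj}).

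\textbf{Bounding the area.} For the area bound, I would fix a resource type $i$ on which job $j$ was adjusted, so $p_j^{(i)} = \lceil \mu P^{(i)} \rceil$. Writing out the definition, $a_j^{(i)}(p_j) = p_j^{(i)} t_j(p_j) / P^{(i)} = \lceil \mu P^{(i)} \rceil \, t_j(p_j) / P^{(i)}$. Using~(\ref{eq.tj}) to replace $t_j(p_j)$ by $t_j(p'_j)/\mu$, this is at most $\frac{\lceil \mu P^{(i)} \rceil}{\mu P^{(i)}} \, t_j(p'_j)$. The condition $P^{(i)} \ge 1/\mu^2$ ensures $\mu P^{(i)} \ge 1/\mu \ge 2$ (since $\mu < 0.5$), so $\lceil \mu P^{(i)} \rceil \le \mu P^{(i)} + 1 \le \mu P^{(i)}(1 + \tfrac{1}{\mu P^{(i)}}) \le \mu P^{(i)} \cdot (1 + \mu)$; hence $\frac{\lceil \mu P^{(i)} \rceil}{\mu P^{(i)}} \le 1 + \mu \le 2$, giving $a_j^{(i)}(p_j) \le 2\, t_j(p'_j)$. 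To finish I need to relate $t_j(p'_j)$ to the average area $a_j(p'_j) = \frac{1}{d}\sum_k a_j^{(k)}(p'_j) = \frac{1}{d}\sum_k p'^{(k)}_j t_j(p'_j)/P^{(k)} \ge \frac{1}{d} t_j(p'_j)$ provided each $p'^{(k)}_j \ge 1$ — which holds because job $j$ was adjusted on type $i$, so $p'^{(i)}_j > \lceil \mu P^{(i)}\rceil \ge 1$, but I also need the other coordinates to be at least $1$; I would appeal to the fact that a non-dominated allocation never allocates $0$ of a resource when allocating more would not increase the area, or more simply note that the definition of area uses work $p_j^{(i)} t_j$ which is $0$ only if $p_j^{(i)}=0$, and such zero-coordinate allocations can be assumed not to occur (or handled by a separate convention). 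Assuming $p'^{(k)}_j \ge 1$ for all $k$, we get $t_j(p'_j) \le d \cdot a_j^{(k)}(p'_j)$ for each $k$, hence $t_j(p'_j) \le d \cdot \min_k a_j^{(k)}(p'_j) \le d \cdot a_j(p'_j)$ — wait, that last step is wrong in direction, so instead I would use $a_j(p'_j) \ge \frac{1}{d} \cdot \frac{p'^{(k)}_j t_j(p'_j)}{P^{(k)}}$ summed appropriately; the cleanest route is $a_j(p'_j) = \frac{1}{d}\sum_k \frac{p'^{(k)}_j}{P^{(k)}} t_j(p'_j) \ge \frac{t_j(p'_j)}{d}\cdot \frac{p'^{(i)}_j}{P^{(i)}}$, but since $p'^{(i)}_j$ could be as small as $\lceil \mu P^{(i)}\rceil + 1$, this gives roughly $a_j(p'_j) \gtrsim \mu t_j(p'_j)/d$, hence $t_j(p'_j) \le \frac{d}{\mu} a_j(p'_j)$, and combined with $a_j^{(i)}(p_j) \le 2 t_j(p'_j)$ this yields $a_j^{(i)}(p_j) \le \frac{2d}{\mu} a_j(p'_j)$, which is weaker than~(\ref{eq.aj}). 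So the tight argument must instead bound $a_j^{(i)}(p_j)$ directly against $a_j^{(i)}(p'_j)$ and then against $a_j(p'_j)$: since $a_j^{(i)}(p_j) = \frac{p_j^{(i)}}{p'^{(i)}_j} \cdot \frac{p'^{(i)}_j t_j(p_j)}{P^{(i)}}$ and $t_j(p_j) \le \frac{p'^{(i)}_j}{p_j^{(i)}}\cdot(\text{something})$...

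\textbf{The main obstacle.} The delicate point — and where I would spend the most care — is precisely this last chain: getting the clean factor $d$ rather than $d/\mu$ or $2d$. The right approach is to note $a_j^{(i)}(p_j) = \frac{p_j^{(i)}}{P^{(i)}} t_j(p_j)$, then bound $t_j(p_j)$ using the \emph{single-coordinate} version of monotonicity restricted to coordinate $i$ (allocations that agree with $p'_j$ everywhere except scaled down on $i$), giving $t_j(p_j) \le \frac{p'^{(i)}_j}{p_j^{(i)}} t_j(p'_j)$ when only coordinate $i$ is reduced — but adjustment may reduce several coordinates at once, so one must argue coordinate by coordinate or use that the max in Assumption~\ref{assume.monotonic} is achieved and bound accordingly. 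Then $a_j^{(i)}(p_j) \le \frac{p'^{(i)}_j}{P^{(i)}} t_j(p'_j) = a_j^{(i)}(p'_j) = d \cdot \frac{1}{d} a_j^{(i)}(p'_j) \le d \cdot a_j(p'_j)$, where the final inequality holds because each of the $d$ terms in the average $a_j(p'_j)$ is nonnegative, so $a_j(p'_j) \ge \frac{1}{d} a_j^{(i)}(p'_j)$. This is clean, but it hinges on the claim that reducing only coordinate $i$ (as opposed to all reduced coordinates simultaneously) gives $t_j(p_j) \le \frac{p'^{(i)}_j}{p_j^{(i)}} t_j(p'_j)$; since in~(\ref{eq.adjust}) a job may be adjusted on multiple types, the honest statement is $t_j(p_j) \le \big(\max_i p'^{(i)}_j/p_j^{(i)}\big) t_j(p'_j)$, and then the factor $1/\mu$ reappears unless one is more clever — I suspect the intended argument introduces an intermediate allocation reduced on only one coordinate, or uses that the area on a \emph{reduced} coordinate is what matters and on such a coordinate $p_j^{(i)}/p'^{(i)}_j \le 1$ exactly cancels the blow-up. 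Resolving which of these the authors intend, and making the bookkeeping rigorous, is the crux of the lemma; everything else is routine ceiling-function estimation using $\maxresource{i} \ge 1/\mu^2$.
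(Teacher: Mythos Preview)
Your proof of the execution-time bound~(\ref{eq.tj}) is correct and essentially identical to the paper's.

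For the area bound~(\ref{eq.aj}), however, there is a genuine gap: you correctly diagnose the obstacle --- that when several coordinates are reduced simultaneously, the blow-up factor in $t_j(p_j)$ is governed by the \emph{worst} coordinate, so naively writing $a_j^{(i)}(p_j) \le a_j^{(i)}(p'_j)$ fails for a type $i$ that is not the worst --- but you do not resolve it. You also restrict attention to types $i$ on which $j$ was adjusted, whereas the lemma must hold for \emph{every} type $i$.

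The paper's missing idea is to pivot through the single dominant type. Let $k$ be the type with the largest reduction ratio $x_j^{(k)} = p'^{(k)}_j / p_j^{(k)} = \max_i p'^{(i)}_j / p_j^{(i)}$. For this type, the time blow-up is exactly compensated by the allocation cut, so the work does not increase:
\[
w_j^{(k)}(p_j) = p_j^{(k)}\, t_j(p_j) \le p_j^{(k)}\cdot x_j^{(k)}\, t_j(p'_j) = p'^{(k)}_j\, t_j(p'_j) = w_j^{(k)}(p'_j),
\]
giving $a_j^{(k)}(p_j) \le a_j^{(k)}(p'_j) \le \sum_\ell a_j^{(\ell)}(p'_j) = d\cdot a_j(p'_j)$ cleanly. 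For any other type $i\neq k$, one does not try to compare $a_j^{(i)}(p_j)$ with $a_j^{(i)}(p'_j)$ at all; instead one routes the bound through type $k$: since $p_j^{(i)}/P^{(i)} \lesssim \mu$ (either $p_j^{(i)} \le \lfloor \mu P^{(i)}\rfloor$, or $p_j^{(i)} = \lceil \mu P^{(i)}\rceil$ and the hypothesis $P^{(i)}\ge 1/\mu^2$ absorbs the ceiling slack) and $p_j^{(k)} \ge \mu P^{(k)}$, one gets
\[
a_j^{(i)}(p_j) = \frac{p_j^{(i)}}{P^{(i)}}\, t_j(p_j) \;\lesssim\; \mu \cdot x_j^{(k)}\, t_j(p'_j) = \mu\cdot \frac{w_j^{(k)}(p'_j)}{p_j^{(k)}} \le \frac{w_j^{(k)}(p'_j)}{P^{(k)}} = a_j^{(k)}(p'_j) \le d\cdot a_j(p'_j).
\]
This is the step you were circling around but did not find: the factor $1/\mu$ from the time bound is killed not by anything on type $i$, but by the fact that $p_j^{(k)} \ge \mu P^{(k)}$ on the \emph{dominant} type.
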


\begin{proof}
For any adjusted job $j$, let $x_j^{(i)} = \frac{p'^{(i)}_j}{p^{(i)}_j}$ denotes its resource reduction factor on any resource type $i$, and let $k = \argmin_{i=1\dots d}x_j^{(i)}$ denote the resource type with the largest reduction factor for $j$.

Since the job's final resource allocation $p_j$ is at most its initial allocation $p'_j$, i.e., $p_j \preceq p'_j$, and according to the adjustment procedure in Equation (\ref{eq.adjust}), we have $x_j^{(k)} \le \frac{P^{(k)}}{\lceil \mu P^{(k)} \rceil} \le \frac{1}{\mu}$. Thus, based on Assumption \ref{assume.monotonic}, we can get $t_j(p_j) \le \big(\max_{i=1\dots d} x_j^{(i)}\big) \cdot t_j(p'_j) = x_j^{(k)} \cdot t_j(p'_j) \le \frac{t_j(p'_j)}{\mu}$.

To prove the area bound, we distinguish three cases.

Case (1): For resource type $k$ with the largest reduction factor, we have $w_j^{(k)}(p_j) = p_j^{(k)}\cdot t_j(p_j) \le \frac{p'^{(k)}_j}{x^{(k)}_j}\cdot (x^{(k)}_j \cdot t_j(p'_j)) = p'^{(k)}_j \cdot t_j(p'_j) = w_j^{(k)}(p'_j)$. Thus, the area of the job on resource type $k$ satisfies $a_j^{(k)}(p_j) = \frac{w_j^{(k)}(p_j)}{P^{(k)}} \le \frac{w_j^{(k)}(p'_j)}{P^{(k)}} \le \sum_{\ell=1}^{d} \frac{w_j^{(\ell)}(p'_j)}{P^{(\ell)}} = d\cdot a_j(p'_j)$.

Case (2): For any resource type $i \neq k$ with $p_j^{(i)} \le \lfloor \mu P^{(i)} \rfloor \le \mu P^{(i)}$, and since $p_j^{(k)} = \lceil \mu P^{(k)} \rceil \ge \mu P^{(k)}$, we have $a_j^{(i)}(p_j) = \frac{w_j^{(i)}(p_j)}{P^{(i)}} = \frac{p_j^{(i)}\cdot t_j(p_j)}{P^{(i)}} \le \frac{\mu P^{(i)} \cdot t_j(p_j)}{P^{(i)}} \le \mu \cdot x^{(k)}_j \cdot t_j(p'_j) = \mu \cdot \frac{p'^{(k)}_j \cdot t_j(p'_j)}{p_j^{(k)}} \le \mu \cdot~\frac{w_j^{(k)}(p'_j)}{\mu P^{(k)}} = \frac{w_j^{(k)}(p'_j)}{P^{(k)}} \le \sum_{\ell=1}^{d} \frac{w_j^{(\ell)}(p'_j)}{P^{(\ell)}} = d\cdot a_j(p'_j)$.

Case (3): For any resource type $i \neq k$ with $p_j^{(i)} = \lceil \mu P^{(i)} \rceil \le \mu P^{(i)} + 1$, by following the derivation steps in Case (2), we can get $a_j^{(i)}(p_j) \le \left(1+\frac{1}{\mu P^{(i)}}\right)\frac{w_j^{(k)}(p'_j)}{P^{(k)}} \le \sum_{\ell=1}^{d} \frac{w_j^{(\ell)}(p'_j)}{P^{(\ell)}} + \frac{w_j^{(k)}(p'_j)}{\mu P^{(i)}P^{(k)}} - \frac{w_j^{(i)}(p'_j)}{P^{(i)}} = \sum_{\ell=1}^{d} \frac{w_j^{(\ell)}(p'_j)}{P^{(\ell)}} + \frac{t_j(p'_j)}{P^{(i)}} \left(\frac{p'^{(k)}_j}{\mu P^{(k)}} - p'^{(i)}_j \right)$. Since $p'^{(k)}_j \le P^{(k)}$ and $p'^{(i)}_j \ge \lceil \mu P^{(i)} \rceil \ge \mu P^{(i)}$, we have $\frac{p'^{(k)}_j}{\mu P^{(k)}} - p'^{(i)}_j \le \frac{1}{\mu} - \mu P^{(i)}$, which is at most 0 when $P^{(i)} \ge \frac{1}{\mu^2}$. In this case, we get $a_j^{(i)}(p_j) \le \sum_{\ell=1}^{d} \frac{w_j^{(\ell)}(p'_j)}{P^{(\ell)}} = d\cdot a_j(p'_j)$.
\end{proof}

Algorithm \ref{alg.allocation} summarizes all three steps involved in this first phase of the multi-resource scheduling algorithm.

{\LinesNumberedHidden
\begin{algorithm}[t]
\small
\caption{Resource Allocation (Phase 1)}\label{alg.allocation}
\KwIn{For each job $j$, the execution time $t_j(p_j)$ and the average normalized work $a_j(p_j)$ under all possible resource allocations, given values for the parameters $\rho$ and $\mu$. }
\KwOut{Resource allocation decision $\mathbf{p} \!=\! (p_1, p_2, \dots, p_n)$ for all jobs. }
\Begin{
	\textbf{(Step 1)}: For each job $j$, discard the subset $\mathcal{D}_j \subset \mathcal{S}$ of dominated resource allocations as defined in Equation (\ref{eq.dominated})\;
    \textbf{(Step 2)}: Transform the resource allocation problem to the DTCT problem and adapt the algorithm in \cite{Skutella98_tradeoff} to obtain an initial allocation decision $\mathbf{p'}$ that satisfies Equations (\ref{eq.Cbound}) and (\ref{eq.Abound})\;
    \textbf{(Step 3)}: For each job $j$ and each resource type $i$, adjust the initial allocation in $\mathbf{p'}$ based on Equation (\ref{eq.adjust}) to obtain a final resource allocation decision $\mathbf{p}$ that satisfies Equations (\ref{eq.tj}) and (\ref{eq.aj}).
	}
\end{algorithm}
}

\subsection{Phase 2: List Scheduling}\label{sec.scheduling}

\subsubsection{Algorithm Description} The second phase schedules the jobs by making a starting time decision $\allstarttimes$, given the resource allocation decision $\allresources$ determined by the first phase. This is done through a modified list scheduling strategy, as shown in Algorithm~\ref{alg.list}, that extends to multiple types of resources.

A job is said to be \emph{ready} if all of its immediate predecessors in the precedence graph have been completed or if the job has no immediate predecessor. The algorithm starts by inserting all ready jobs into a queue $\mathcal{Q}$. Then, at time 0 or whenever a running job $k$ completes and hence releases resources, the algorithm inserts, into the queue $\mathcal{Q}$, any new job $k'$ that becomes ready due to the completion of job $k$.
It then goes through the list of all ready jobs in $\mathcal{Q}$ and schedules each job $j$ that can be executed at the current time if its resource allocation $p_j$ can be met by the amount of available resources in all resource types.

We point out that the ready jobs can be inserted into the queue in any order without affecting the approximation ratio of the algorithm. In practice, giving priority to certain jobs (e.g., with longer execution time or on the critical path) may yield better performance.

{\LinesNumberedHidden
\begin{algorithm}[t]
    \small
	\caption{List Scheduling (Phase 2)}\label{alg.list}
	\KwIn{Resource allocation decision $\mathbf{p} \!=\! (p_1, p_2, \dots, p_n)$ for all jobs, and their precedence constraints. }
    \KwOut{A list schedule for the jobs with starting time decision $\allstarttimes \!=\! (s_{1}, s_{2}, \dots, s_{n})$.}
	\Begin{
		insert all ready jobs into a queue $\mathcal{Q}$\; 
        $P_{avail}^{(i)} \leftarrow P^{(i)}, \forall i$\;
        \When{\textnormal{at time $0$ or a job $k$ completes execution}}{
            $curr\_time \leftarrow getCurrentTime()$\;
            $P_{avail}^{(i)} \leftarrow P_{avail}^{(i)} + p_{k}^{(i)}, \forall i$\;
            \For{\textnormal{each job $k'$ that becomes ready}}{
                insert job $k'$ into queue $\mathcal{Q}$\;
            }
            \For{\textnormal{each job }$j \in \mathcal{Q}$}{
                \If{$P_{avail}^{(i)} \ge p_j^{(i)}, \forall i$}{
                    $s_j \leftarrow curr\_time$ and execute job $j$ now\;
                    $P_{avail}^{(i)} \leftarrow P_{avail}^{(i)} - p_{j}^{(i)}, \forall i$\;
                    remove job $j$ from queue $\mathcal{Q}$\;
                }
            }
		}
	}
\end{algorithm}
}

\subsubsection{Properties of List Scheduling}
We now derive some properties of the list scheduling algorithm, which will be used later in the analysis of the overall multi-resource scheduling algorithm.

We first define some notations. Let $T$ denote the makespan of a list schedule. We note that the algorithm only allocates and de-allocates resources upon job completions. Hence, the entire schedule's duration $[0, T]$ can be partitioned into a set $\mathcal{I} = \{I_1, I_2, \dots\}$ of non-overlapping intervals, where jobs only start (or complete) at the beginning (or end) of an interval, and the amount of utilized resource for any resource type does not change during an interval. For any resource type $i$, let $P^{(i)}_{util}(I)$ denote the total amount of utilized resources from all jobs that are running during interval $I \in \mathcal{I}$. We further classify the set of intervals into the following three categories.
\begin{compactitem}
\item $\mathcal{I}_1$: set of intervals during which the amount of utilized resources is at most $\lceil \mu P^{(i)} \rceil - 1$ for all resource type $i$, i.e., $\mathcal{I}_1 = \{I \mid \forall i, P^{(i)}_{util}(I) \le \lceil \mu P^{(i)} \rceil - 1 \}$.
\item $\mathcal{I}_2$: set of intervals during which there exists a resource type $k$ that utilizes at least $\lceil \mu P^{(k)} \rceil$ amount of resources, but the amount of utilized resources is at most $\lceil (1-\mu) P^{(i)} \rceil - 1$ for all resource type $i$, i.e., $\mathcal{I}_2 = \{I \mid \exists k, P^{(k)}_{util}(I) \ge \lceil \mu P^{(k)} \rceil \text{ and }  \forall i, \\ P^{(i)}_{util}(I) \le \lceil (1-\mu) P^{(i)} \rceil - 1 \}$.
\item $\mathcal{I}_3$: set of intervals during which there exists a resource type $k$ that utilizes at least $\lceil (1-\mu) P^{(k)} \rceil$ amount of resources, i.e., $\mathcal{I}_3 = \{I \mid \exists k, P^{(k)}_{util}(I) \ge \lceil (1-\mu) P^{(k)} \rceil \}$.
\end{compactitem}

Let $|I|$ denote the duration of an interval $I$, and let $T_1 = \sum_{I\in \mathcal{I}_1} |I|$, $T_2 = \sum_{I\in \mathcal{I}_2} |I|$ and $T_3 = \sum_{I\in \mathcal{I}_3} |I|$ be the total durations of the three categories of intervals, respectively. Since $\mathcal{I}_1$, $\mathcal{I}_2$ and $\mathcal{I}_3$ are obviously disjoint and partition $\mathcal{I}$, we have:
\begin{align}\label{eq.T}
T = T_1 + T_2 + T_3 \ .
\end{align}

Furthermore, for each job $j$ and each interval $I$, we define $\beta_{j, I}$ to be the \emph{fraction} of the job executed during that interval. For instance, if one third of job $j$ is executed in interval $I$ and two thirds of the job is executed in interval $I'$, we have $\beta_{j, I} = 1/3$ and $\beta_{j, I'} = 2/3$. Note that the fraction is defined in terms of either the execution time or the area (work) of the job, which are equivalent here since the resource allocation of the job has been fixed. Thus, for each job $j$, we have $\sum_{I\in \mathcal{I}} \beta_{j, I} = 1$.

The following lemma bounds the durations of the first two categories of intervals in terms of the execution time along the critical path of the initial resource allocation $\mathbf{p'}$. 
\begin{lemma}[Critical-Path Bound]\label{lem.cp_bound}
For any choice of $\mu \in (0, 0.5)$, we have $T_1 + \mu T_2 \le C(\mathbf{p'})$.
\end{lemma}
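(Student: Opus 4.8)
The plan is to produce a single path $f$ in the DAG $G$ whose execution time under the initial allocation $\mathbf{p'}$ already pays for $T_1 + \mu T_2$; since $C(\mathbf{p'}) \ge C(\mathbf{p'},f) = \sum_{j\in f} t_j(p'_j)$ by definition of the critical-path length, this finishes the proof. I would build $f$ backwards along the list schedule of Algorithm~\ref{alg.list}: take a job $j_1$ completing at the makespan $T$ (it has no successor, otherwise that successor would finish after $T$); as long as the current job $j_l$ has predecessors, let $j_{l+1}$ be its immediate predecessor that completes last; stop at the first source $j_m$. This gives a maximal path $f = (j_m \to \cdots \to j_1)$.

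The key structural claim I would establish first is that during any interval $I \in \mathcal{I}_1 \cup \mathcal{I}_2$, every ready job fits in the available resources and hence is started immediately. For $I \in \mathcal{I}_1$, the free amount on each type $i$ is at least $P^{(i)} - \lceil \mu P^{(i)} \rceil + 1$, which is $\ge \lceil \mu P^{(i)} \rceil$ because $\mu < 1/2$ forces $2\lceil \mu P^{(i)} \rceil \le P^{(i)} + 1$; for $I \in \mathcal{I}_2$ the free amount is at least $P^{(i)} - \lceil (1-\mu) P^{(i)} \rceil + 1 > \mu P^{(i)}$, hence $\ge \lceil \mu P^{(i)} \rceil$ by integrality. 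In both cases this bound is $\ge p_j^{(i)}$ for every job $j$ by the adjustment rule (\ref{eq.adjust}). (A minor point to verify: when $j$ is inspected in the for-loop at the start of $I$, the resources already committed in that step do not exceed $P^{(i)}_{util}(I)$, since $j$ itself is not among them.) The consequence is that in every waiting window $[c_{j_{l+1}}, s_{j_l})$ — where $j_{l+1}$ being $j_l$'s last-completing predecessor makes $j_l$ ready exactly at $c_{j_{l+1}}$, but not yet running — and in $[0, s_{j_m})$ for the source, only intervals of $\mathcal{I}_3$ can occur. Since job starts and completions are interval boundaries, $[0,T]$ decomposes into these waiting windows plus the running windows $[s_{j_l}, c_{j_l})$, with each interval lying inside exactly one window; therefore every interval of $\mathcal{I}_1 \cup \mathcal{I}_2$ sits inside a running window $[s_{j_l}, c_{j_l})$, so exactly one path job $j_l$ runs throughout it (exactly one, as jobs on a path never overlap in time).

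Next I would charge the $\mathcal{I}_1$- and $\mathcal{I}_2$-time of each path job to its $\mathbf{p'}$-execution time. A path job $j_l$ running during an $\mathcal{I}_1$-interval must be unadjusted: an adjusted job has $p^{(k)}_{j_l} = \lceil \mu P^{(k)} \rceil$ on some type $k$, which would make $P^{(k)}_{util} \ge \lceil \mu P^{(k)} \rceil$ and exclude the interval from $\mathcal{I}_1$. An unadjusted job has $t_{j_l}(p_{j_l}) = t_{j_l}(p'_{j_l})$, whereas an adjusted one has $t_{j_l}(p'_{j_l}) \ge \mu\, t_{j_l}(p_{j_l})$ by Lemma~\ref{lem.adjust} and contributes no $\mathcal{I}_1$-time. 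Writing $\tau_1(j)$ and $\tau_2(j)$ for the total $\mathcal{I}_1$- and $\mathcal{I}_2$-time during which $j$ runs (so $\tau_1(j) + \tau_2(j) \le t_j(p_j)$), one gets $t_j(p'_j) \ge \tau_1(j) + \mu\,\tau_2(j)$ in both cases (for unadjusted jobs since $\tau_1(j) + \mu\,\tau_2(j) \le \tau_1(j)+\tau_2(j) \le t_j(p_j) = t_j(p'_j)$; for adjusted jobs since $\tau_1(j) = 0$ and $\mu\,\tau_2(j) \le \mu\, t_j(p_j) \le t_j(p'_j)$). Summing over $f$ and using that each $\mathcal{I}_1$-interval (resp.\ $\mathcal{I}_2$-interval) is counted by exactly one $\tau_1(j_l)$ (resp.\ $\tau_2(j_l)$),
\[ C(\mathbf{p'}) \ \ge\ \sum_{l=1}^{m} t_{j_l}(p'_{j_l}) \ \ge\ \sum_{l=1}^{m}\big(\tau_1(j_l) + \mu\,\tau_2(j_l)\big) \ =\ T_1 + \mu T_2 . \]

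I expect the main obstacle to be the first structural claim — that during $\mathcal{I}_1$- and $\mathcal{I}_2$-intervals every ready job fits — which depends delicately on the ceilings chosen in the definitions of $\mathcal{I}_1$ and $\mathcal{I}_2$, on the adjustment rule (\ref{eq.adjust}), on the hypothesis $\mu < 1/2$, and on integrality of the resource amounts. Once it is in place, the decomposition of $[0,T]$ along $f$ and the per-job charging are routine bookkeeping.
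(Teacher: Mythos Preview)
Your proposal is correct and follows essentially the same approach as the paper: both establish that during $\mathcal{I}_1\cup\mathcal{I}_2$ intervals there is room for any ready job (so some job on a single path $f$ is running), then charge $\mathcal{I}_1$-time at rate $1$ and $\mathcal{I}_2$-time at rate $\mu$ to the $\mathbf{p'}$-execution times along $f$ using Lemma~\ref{lem.adjust}. The only cosmetic difference is that you construct $f$ explicitly by backward chaining (and are a bit more careful about the ceiling/integrality checks and the for-loop ordering), whereas the paper cites the existence of such a path as a known list-scheduling fact and expresses the charging via the fractions $\beta_{j,I}$.
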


\begin{proof}

For any interval $I \in \mathcal{I}_1 \cup \mathcal{I}_2$, the amount of utilized resource for any resource type $i$ is at most $\lceil (1-\mu) P^{(i)} \rceil - 1$, so the amount of available resource is at least $P^{(i)} + 1 - \lceil (1-\mu) P^{(i)} \rceil \ge \lceil \mu P^{(i)} \rceil$. According to the resource allocation algorithm, any job is allocated at most $\lceil \mu P^{(i)} \rceil$ amount of resource for resource type $i$. Thus, there is sufficient resource available to execute any additional job (if one is ready) during any interval $I \in \mathcal{I}_1 \cup \mathcal{I}_2$. This implies that there is no ready job in the queue $\mathcal{Q}$, since otherwise the list scheduling algorithm would have scheduled the job.

In list scheduling, it is known that there exists a path $f$ in the graph such that whenever there is no ready job in the queue, some job along that path is running \cite{Feldmann98_DAG, Lepere01_DAG, Jansen06_DAG}. Thus, during any interval $I \in \mathcal{I}_1 \cup \mathcal{I}_2$, some job along path $f$ is running, and we let $j(I)\in f$ denote such a job.

Now, consider the initial resource allocation $\mathbf{p'}$.
During any interval $I \in \mathcal{I}_1$, the amount of utilized resource for any resource type $i$ is at most $\lceil \mu P^{(i)} \rceil - 1$, so job $j(I)$ must be unadjusted. Thus, we have $t_{j(I)}(p_{j(I)}) = t_{j(I)}(p'_{j(I)})$. However, during any interval $I \in \mathcal{I}_2$, job $j(I)$ could be adjusted, and thus, according to Lemma \ref{lem.adjust} (Inequality (\ref{eq.tj})), we have $\mu \cdot t_{j(I)}(p_{j(I)}) \le t_{j(I)}(p'_{j(I)})$. We can then derive:
\vspace{-0.07in}
\begin{align*}
T_1 + \mu T_2 &= \sum_{I\in \mathcal{I}_1} t_{j(I)}(p_{j(I)}) \cdot \beta_{j(I), I} + \mu \sum_{I\in \mathcal{I}_2} t_{j(I)}(p_{j(I)}) \cdot \beta_{j(I), I} \\
&\le \sum_{I\in \mathcal{I}_1} t_{j(I)}(p'_{j(I)}) \cdot \beta_{j(I), I} + \sum_{I\in \mathcal{I}_2} t_{j(I)}(p'_{j(I)}) \cdot \beta_{j(I), I} \\
&\le \sum_{j \in f} \Big(t_{j}(p'_{j}) \cdot \!\!\sum_{I\in \mathcal{I}_1 \cup \mathcal{I}_2} \beta_{j, I} \Big) \\
&\le \sum_{j \in f} t_{j}(p'_{j}) = C(\allresources', f) \le C(\allresources') \ . \qedhere
\end{align*}
\end{proof}

The following lemma bounds the durations of the last two categories of intervals in terms of the average total area of the initial resource allocation $\mathbf{p'}$.
\begin{lemma}[Area Bound]\label{lem.area_bound}
For any choice of $\mu \in (0, 0.5)$, if $P^{\min} = \min_{i} P^{(i)} \ge \frac{1}{\mu^2}$, we have $\mu T_2 + (1-\mu) T_3 \le d\cdot A(\mathbf{p'})$.
\end{lemma}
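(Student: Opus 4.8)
The plan is to complement the Critical-Path Bound (Lemma~\ref{lem.cp_bound}) by charging the durations of the ``well-utilized'' intervals against \emph{area} instead of against execution time along a path. The first step will be a bookkeeping identity. Since resource usage is constant throughout each interval and any job that runs during an interval $I$ runs for its entire duration $|I|$, the fraction of job $j$ executed in $I$ is $\beta_{j,I}=|I|/t_j(p_j)$; hence, for every resource type $i$,
$$\sum_{j} w_j^{(i)}(p_j)\,\beta_{j,I} \;=\; \sum_{j \text{ running in } I} p_j^{(i)}\,|I| \;=\; P^{(i)}_{util}(I)\cdot |I| \ .$$
Dividing by $P^{(i)}$ shows that the area accumulated on resource type $i$ during interval $I$ equals $\sum_{j} a_j^{(i)}(p_j)\,\beta_{j,I} = P^{(i)}_{util}(I)\,|I|/P^{(i)}$.

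Next, for each interval $I \in \mathcal{I}_2 \cup \mathcal{I}_3$ I would fix a resource type $k(I)$ witnessing its class membership. By definition $P^{(k(I))}_{util}(I) \ge \lceil \mu P^{(k(I))}\rceil \ge \mu P^{(k(I))}$ when $I\in\mathcal{I}_2$, and $P^{(k(I))}_{util}(I) \ge \lceil (1-\mu) P^{(k(I))}\rceil \ge (1-\mu) P^{(k(I))}$ when $I\in\mathcal{I}_3$. Combined with the identity above, this gives $\sum_{j} a_j^{(k(I))}(p_j)\,\beta_{j,I} \ge \mu|I|$ for $I\in\mathcal{I}_2$ and $\sum_{j} a_j^{(k(I))}(p_j)\,\beta_{j,I} \ge (1-\mu)|I|$ for $I\in\mathcal{I}_3$. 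Summing over both classes (which are disjoint, as already noted) yields
$$\mu T_2 + (1-\mu)T_3 \;\le\; \sum_{I\in\mathcal{I}_2\cup\mathcal{I}_3}\ \sum_{j} a_j^{(k(I))}(p_j)\,\beta_{j,I} \ .$$

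The last step will be to transfer from the final allocation $\mathbf{p}$ to the initial allocation $\mathbf{p'}$. For an adjusted job, Inequality~(\ref{eq.aj}) of Lemma~\ref{lem.adjust} gives $a_j^{(k(I))}(p_j) \le d\cdot a_j(p'_j)$, where the hypothesis $P^{\min}\ge 1/\mu^2$ guarantees the precondition $P^{(k(I))}\ge 1/\mu^2$ of that lemma for every resource type; for an unadjusted job $p_j = p'_j$, so $a_j^{(k(I))}(p_j) = a_j^{(k(I))}(p'_j)\le \sum_{\ell=1}^{d} a_j^{(\ell)}(p'_j) = d\cdot a_j(p'_j)$ as well. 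Thus $a_j^{(k(I))}(p_j)\le d\cdot a_j(p'_j)$ in every case, and substituting and exchanging the order of summation gives
$$\mu T_2 + (1-\mu)T_3 \;\le\; d\sum_{j} a_j(p'_j) \sum_{I\in\mathcal{I}_2\cup\mathcal{I}_3}\beta_{j,I} \;\le\; d\sum_{j} a_j(p'_j)\sum_{I\in\mathcal{I}}\beta_{j,I} \;=\; d\sum_{j} a_j(p'_j) \;=\; d\cdot A(\mathbf{p'}) \ ,$$
using $\sum_{I\in\mathcal{I}}\beta_{j,I}=1$ and nonnegativity of the $\beta_{j,I}$, which is exactly the claim.

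The only part requiring genuine care is the first step: making the per-interval area accounting exact, in particular the fact that each job running during an interval contributes precisely its full width $|I|$, together with the adjusted/unadjusted case split needed to replace $a_j^{(k(I))}(p_j)$ by $d\cdot a_j(p'_j)$ uniformly (and checking that the condition $P^{\min}\ge 1/\mu^2$ is used only here). Once these are in place, the remainder is a direct summation, so I do not anticipate a real obstacle.
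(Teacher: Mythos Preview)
Your proposal is correct and follows essentially the same route as the paper: pick a witness resource type for each interval in $\mathcal{I}_2\cup\mathcal{I}_3$, use the utilization lower bound to charge $\mu|I|$ or $(1-\mu)|I|$ against the area on that type, invoke Inequality~(\ref{eq.aj}) (with the same observation that it trivially extends to unadjusted jobs) to pass from $a_j^{(k(I))}(p_j)$ to $d\cdot a_j(p'_j)$, and finish with $\sum_{I}\beta_{j,I}\le 1$. The paper states the per-interval work identity $\sum_j\beta_{j,I}\,w_j^{(i)}(p_j)\ge |I|\cdot\mu P^{(i)}$ directly rather than deriving it from $\beta_{j,I}=|I|/t_j(p_j)$ as you do, but the content is identical.
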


\begin{proof}
For any interval $I \in \mathcal{I}_2$, there exists a resource type $i$ such that the amount of utilized resource is at least $\lceil \mu P^{(i)} \rceil$ based on the definition of $\mathcal{I}_2$. Therefore, the total work done on resource type $i$ from all jobs during this interval satisfies: $\sum_{j=1}^{n} \beta_{j, I} \cdot w_j^{(i)}(p_j) \ge |I| \cdot \lceil \mu P^{(i)} \rceil \ge |I| \cdot \mu P^{(i)}$. Thus, we have: $\mu \cdot |I| \le \sum_{j=1}^{n} \beta_{j, I} \cdot \frac{w_j^{(i)}(p_j)}{P^{(i)}} = \sum_{j=1}^{n} \beta_{j, I} \cdot a_j^{(i)}(p_j) \le d \sum_{j=1}^{n} \beta_{j, I} \cdot a_j(p'_j)$.
The last inequality is due to Lemma \ref{lem.adjust} (Inequality (\ref{eq.aj})), if $P^{(i)}\ge \frac{1}{\mu^2}$.
Note that Inequality (\ref{eq.aj}) was proven for any adjusted job but it obviously holds for unadjusted jobs as well. Thus, if $P^{\min} = \min_{i=1\dots d} P^{(i)} \ge \frac{1}{\mu^2}$, we can derive:
\begin{align}\label{eq.T2}
\mu T_2 &= \mu \sum_{I\in \mathcal{I}_2} |I| \nonumber \\
&\le d \sum_{I\in \mathcal{I}_2} \sum_{j=1}^{n} \beta_{j, I} \cdot a_j(p'_j) \nonumber \\
&= d \sum_{j=1}^{n} \Big(a_j(p'_j) \cdot \sum_{I\in \mathcal{I}_2} \beta_{j, I} \Big) \ .
\end{align}

For any interval $I \in \mathcal{I}_3$, there exists a resource type $i$ such that the amount of utilized resource is at least $\lceil (1-\mu) P^{(i)} \rceil$. Using the same argument, we can derive:
\begin{align}\label{eq.T3}
(1-\mu) T_3 \le d \sum_{j=1}^{n} \Big(a_j(p'_j) \cdot \sum_{I\in \mathcal{I}_3} \beta_{j, I} \Big) \ .
\end{align}

Thus, combining Inequalities (\ref{eq.T2}) and (\ref{eq.T3}), we can get:
\begin{align*}
\mu T_2 + (1-\mu) T_3 &\le d \sum_{j=1}^{n} \Big(a_j(p'_j) \cdot \!\!\sum_{I\in \mathcal{I}_2 \cup \mathcal{I}_3} \beta_{j, I} \Big) \\
&\le d \sum_{j=1}^{n} a_j(p'_j) =  d \cdot A(\mathbf{p'}) \ .  \qedhere
\end{align*}
\end{proof}

\subsection{Approximation Results}\label{sec.analysis}

We now derive the main approximation results of the multi-resource scheduling algorithm, which combines the resource allocation phase (Algorithm \ref{alg.allocation}) and the list scheduling phase (Algorithm \ref{alg.list}).
The following theorem shows its approximation ratio for any number $d$ of resource types.
\begin{theorem}\label{thm.main}
For any $d\ge 1$ and if $P^{\min} \ge 7$, the performance of the multi-resource scheduling algorithm satisfies:
\begin{align*}
\frac{T}{T_{\opt}} \le \phi d +2\sqrt{\phi d} +1 \le 1.619d+2.545\sqrt{d}+1 \ ,
\end{align*}
where $\phi = \frac{1+\sqrt{5}}{2}$ is the golden ratio. The result is achieved at $\mu^* = 1-\frac{1}{\phi}\approx 0.382$ and $\rho^* = \frac{1}{\sqrt{\phi d}+1}\approx \frac{1}{1.272\sqrt{d}+1}$.
\end{theorem}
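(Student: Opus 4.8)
The plan is to merge the three structural facts already in hand --- the partition $T = T_1 + T_2 + T_3$ from Equation (\ref{eq.T}), the critical-path bound $T_1 + \mu T_2 \le C(\mathbf{p'})$ from Lemma \ref{lem.cp_bound}, and the area bound $\mu T_2 + (1-\mu) T_3 \le d\cdot A(\mathbf{p'})$ from Lemma \ref{lem.area_bound} --- into a single inequality on $T$, and then to tune the two free parameters $\mu$ and $\rho$. First I would form the nonnegative linear combination with weight $1$ on the critical-path bound and weight $\frac{1}{1-\mu}$ on the area bound, which gives
\begin{align*}
T_1 + \frac{\mu(2-\mu)}{1-\mu}\,T_2 + T_3 \;\le\; C(\mathbf{p'}) + \frac{d}{1-\mu}\,A(\mathbf{p'}) \ .
\end{align*}
The coefficient $\frac{\mu(2-\mu)}{1-\mu}$ of $T_2$ on the left is at least $1$ exactly when $\mu^2 - 3\mu + 1 \le 0$, i.e.\ when $\mu \ge \frac{3-\sqrt5}{2} = 1 - \frac{1}{\phi} = \mu^*$; for any such $\mu$ the left-hand side dominates $T_1+T_2+T_3 = T$, so $T \le C(\mathbf{p'}) + \frac{d}{1-\mu}\,A(\mathbf{p'})$.

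Next I would substitute the allocation guarantees $C(\mathbf{p'}) \le T_{\opt}/\rho$ and $A(\mathbf{p'}) \le T_{\opt}/(1-\rho)$ from Lemma \ref{lem.allocation}, obtaining
\begin{align*}
\frac{T}{T_{\opt}} \;\le\; \frac{1}{\rho} + \frac{d}{(1-\mu)(1-\rho)} \ .
\end{align*}
Since the right-hand side is increasing in $\mu$ over $[\mu^*, \tfrac12)$, it is minimized at $\mu = \mu^*$, where $1-\mu^* = 1/\phi$, so the bound becomes $\frac{1}{\rho} + \frac{\phi d}{1-\rho}$. Minimizing the latter over $\rho \in (0,1)$ --- differentiate and solve $\phi d\,\rho^2 = (1-\rho)^2$ --- yields $\rho^* = \frac{1}{\sqrt{\phi d}+1} \in (0,1)$, and plugging $\rho^*$ back in collapses the bound to $(\sqrt{\phi d}+1)^2 = \phi d + 2\sqrt{\phi d} + 1$; the numerical form then follows from $\phi < 1.619$ and $2\sqrt{\phi} < 2.545$. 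I would also record at the outset that Lemma \ref{lem.area_bound} requires $P^{\min} \ge 1/\mu^2$, and with $\mu = \mu^*$ this reads $P^{\min} \ge 1/(\mu^*)^2 \approx 6.854$, so by integrality of $P^{\min}$ the hypothesis $P^{\min} \ge 7$ is exactly what makes Lemma \ref{lem.area_bound} applicable at $\mu^*$.

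The individual steps are short, and the only place calling for real care is the choice of combination weights: $\mu^*$ is picked precisely so that the "mixed" intervals of $\mathcal{I}_2$ are charged with total coefficient exactly $1$ against the combined critical-path-plus-area budget. If the $T_2$ coefficient were forced above $1$ we would effectively double-count those intervals and have to pay for it either in $C(\mathbf{p'})$ or in $A(\mathbf{p'})$; $\mu^*$ is the smallest $\mu$ avoiding this, and because the final expression $\frac{1}{\rho}+\frac{d}{(1-\mu)(1-\rho)}$ is increasing in $\mu$, nothing is gained by taking $\mu$ larger. The remainder --- the one-variable optimization over $\rho$ and the $P^{\min}\ge 7$ bookkeeping --- is routine; one need only verify $\rho^* \in (0,1)$ for every $d\ge1$ so that Lemma \ref{lem.allocation} can be invoked at $\rho = \rho^*$.
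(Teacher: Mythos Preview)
Your proposal is correct and follows essentially the same approach as the paper. The only cosmetic difference is presentation: the paper substitutes $T_1 \le C(\mathbf{p'}) - \mu T_2$ and $T_3 \le \frac{d}{1-\mu}A(\mathbf{p'}) - \frac{\mu}{1-\mu}T_2$ into $T = T_1+T_2+T_3$ and then requires the residual coefficient $1-\mu-\frac{\mu}{1-\mu}$ of $T_2$ to be nonpositive, whereas you form the weighted sum of the two lemmas directly and require the resulting coefficient $\frac{\mu(2-\mu)}{1-\mu}$ of $T_2$ to be at least $1$; these two conditions are algebraically identical (both reduce to $\mu^2-3\mu+1\le 0$), and the subsequent optimization over $\mu$ and $\rho$ is the same in both.
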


We point out that $P^{\min} \ge 7$ represents a reasonable condition on the total amount of most discrete resource types (e.g., processors, memory blocks, cache lines).

\begin{proof}
Based on the analysis of the list scheduling algorithm, by substituting $T_1$ from Lemma (\ref{lem.cp_bound}) and $T_3$ from Lemma (\ref{lem.area_bound}) into $T = T_1 + T_2 + T_3$, and if $P^{\min} \ge \frac{1}{\mu^2}$, we get:
\begin{align*}
T \le C(\mathbf{p'}) + \frac{d}{1-\mu} A(\mathbf{p'}) + \left(1-\mu -\frac{\mu}{1-\mu}\right) T_2 \ .
\end{align*}
Then, applying the bounds for $C(\mathbf{p'})$ and $A(\mathbf{p'})$ in Lemma \ref{lem.allocation} from the resource allocation algorithm, and when $(1-\mu)^2 \le \mu$, i.e., $\mu \ge \frac{3-\sqrt{5}}{2}=1-\frac{1}{\phi}$, which makes the last term above at most zero, we can derive:
\begin{align*}
T \le \left(\frac{1}{\rho} + \frac{d}{(1-\mu)(1-\rho)}\right) T_{\opt} \triangleq f_d(\mu, \rho) \cdot T_{\opt} \ .
\end{align*}

Clearly, $f_{d}(\mu,\rho)$ is an increasing function of $\mu$ for all $d$. Thus, to minimize the function, we can set $\mu^{*}=1-\frac{1}{\phi}$. In this case, we require $P^{\min} \ge \frac{1}{(\mu^*)^2} \approx 6.854$ and we define $f_d(\rho) \triangleq f_d(\mu^*, \rho) = \frac{1}{\rho}+\frac{\phi d}{1-\rho}$. Now, by setting $f'_d(\rho) = -\frac{1}{\rho^2}+ \frac{\phi d}{(1-\rho)^2} = 0$ and by checking that $f''_d(\rho) > 0$ for all $\rho$, we get $\rho^* = \frac{1}{\sqrt{\phi d}+1}$ that minimizes $f_d(\rho)$. Thus, the approximation ratio is given by $f_d(\mu^*, \rho^*) = \phi d +2\sqrt{\phi d} +1$.
\end{proof}

We point out that, when there is only one type of resource (i.e., $d=1$), Theorem \ref{thm.main} gives an approximation ratio of 5.164, which improves upon the ratio of 5.236 by Lep\`{e}re et al. \cite{Lepere01_DAG}. Jansen and Zhang \cite{Jansen06_DAG} showed that the algorithm actually achieves an even better ratio of 4.73 by proving a tighter critical-path bound than the one shown in Lemma \ref{lem.cp_bound}. Unfortunately, their analysis cannot be generalized to the case with more than one type of resources.

While Theorem 1 proves the approximation ratio of the multi-resource scheduling algorithm for any $d$, the following theorem shows an improved result for large $d$.
\begin{theorem}\label{thm.main2}
For $d\ge 22$ and if $P^{\min} \ge d^{2/3}$, the performance of the multi-resource scheduling algorithm satisfies:
\begin{align*}
\frac{T}{T_{\opt}} \le d+3\sqrt[3]{d^2}+O(\sqrt[3]{d}) \ .
\end{align*}
The result is achieved at $\mu^* \approx \frac{1}{\sqrt[3]{d}}$ and $\rho^* = \frac{\sqrt{1-2\mu^*}}{\sqrt{1-2\mu^*} + \sqrt{d\mu^*}}$.
\end{theorem}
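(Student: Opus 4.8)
The plan is to re-run the chain of inequalities from the proof of Theorem~\ref{thm.main}, but to treat the interval class $\mathcal{I}_2$ in a different way so that a value of $\mu$ that tends to $0$ with $d$ becomes admissible. Concretely, I would first reuse the list-scheduling analysis unchanged up to the point where, substituting $T_3$ from the area bound (Lemma~\ref{lem.area_bound}) and $T_1$ from the critical-path bound (Lemma~\ref{lem.cp_bound}) into $T=T_1+T_2+T_3$, one gets
\[
T \;\le\; C(\mathbf{p'}) + \frac{d}{1-\mu}\,A(\mathbf{p'}) + \Bigl(1-\mu-\frac{\mu}{1-\mu}\Bigr)T_2 ,
\]
which is valid as soon as $P^{\min}\ge 1/\mu^2$. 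In Theorem~\ref{thm.main} the parameter $\mu$ is taken large enough that the coefficient of $T_2$ is $\le 0$, and the term is dropped; here, since we want $\mu\to 0$ as $d\to\infty$, this coefficient is strictly positive, so $T_2$ cannot be discarded and must instead be bounded.

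The new ingredient is to bound $T_2$ through the critical path as well: from $T_1+\mu T_2\le C(\mathbf{p'})$ and $T_1\ge 0$ we get $T_2\le C(\mathbf{p'})/\mu$. Provided $1-\mu-\frac{\mu}{1-\mu}\ge 0$, i.e. $\mu\le\frac{3-\sqrt5}{2}$ (so enlarging the bound on $T_2$ is legitimate), substituting and simplifying the resulting coefficient of $C(\mathbf{p'})$ to $\frac{1-2\mu}{\mu(1-\mu)}$ yields
\[
T \;\le\; \frac{1-2\mu}{\mu(1-\mu)}\,C(\mathbf{p'}) + \frac{d}{1-\mu}\,A(\mathbf{p'}) .
\]
Applying $C(\mathbf{p'})\le T_{\opt}/\rho$ and $A(\mathbf{p'})\le T_{\opt}/(1-\rho)$ from Lemma~\ref{lem.allocation} then gives $T/T_{\opt}\le g_d(\mu,\rho)$ with $g_d(\mu,\rho)=\frac{1-2\mu}{\mu(1-\mu)\rho}+\frac{d}{(1-\mu)(1-\rho)}$.

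It remains to optimize the two parameters. For fixed $\mu$, $g_d(\mu,\cdot)$ has the shape $a/\rho + b/(1-\rho)$ with $a=\frac{1-2\mu}{\mu(1-\mu)}$ and $b=\frac{d}{1-\mu}$, hence is minimized at $\rho^\ast=\sqrt a/(\sqrt a+\sqrt b)=\sqrt{1-2\mu}/(\sqrt{1-2\mu}+\sqrt{d\mu})$, matching the stated $\rho^\ast$, with minimum value $(\sqrt a+\sqrt b)^2=\frac{1}{1-\mu}\bigl(\tfrac{1-2\mu}{\mu}+d+2\sqrt{\tfrac{(1-2\mu)d}{\mu}}\bigr)$. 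Choosing $\mu^\ast\approx d^{-1/3}$ makes $1/(\mu^\ast)^2\approx d^{2/3}\le P^{\min}$ and, once $d$ is large enough, $\mu^\ast\le\frac{3-\sqrt5}{2}$, so all side conditions hold. Writing $\mu^\ast=c\,d^{-1/3}$ and expanding the last display, the $d$-term has coefficient exactly $1$, the $\sqrt[3]{d^2}$-term has coefficient $c+2/\sqrt c$, which is minimized at $c=1$ and equals $3$ there, and every remaining contribution (from expanding $\frac{1}{1-\mu}$, the $\frac{1-2\mu}{\mu}$ term, and the constants) is $O(\sqrt[3]{d})$. This yields $T/T_{\opt}\le d+3\sqrt[3]{d^2}+O(\sqrt[3]{d})$, with the explicit threshold $d\ge 22$ coming from making the side conditions and the hidden constant in the $O$-term concrete.

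The main obstacle is the final optimization: carrying out the $\mu$-minimization carefully enough to land exactly on the coefficient $3$ for the $\sqrt[3]{d^2}$ term, and tracking all lower-order pieces (in particular the $\frac{1}{1-\mu}$ and $\frac{1-2\mu}{\mu}$ expansions, whose cross-terms contribute at the $\sqrt[3]{d}$ scale) to certify that they are genuinely $O(\sqrt[3]{d})$, together with pinning down the precise numerical thresholds $d\ge 22$ and $P^{\min}\ge d^{2/3}$. Everything preceding that is a direct adaptation of the Theorem~\ref{thm.main} machinery, the only conceptual change being the extra use of the critical-path bound to control $T_2$ instead of cancelling it.
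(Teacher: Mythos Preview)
Your proposal is correct and lands on exactly the same function $g_d(\mu,\rho)=\frac{1-2\mu}{\mu(1-\mu)\rho}+\frac{d}{(1-\mu)(1-\rho)}$ as the paper; the paper reaches it by substituting $T_2$ and $T_3$ from Lemmas~\ref{lem.cp_bound} and~\ref{lem.area_bound} and dropping a residual $T_1$ term (whose coefficient $1-\frac{1-2\mu}{\mu(1-\mu)}$ is non-positive precisely when $\mu\le\frac{3-\sqrt5}{2}$), which is algebraically equivalent to your route of keeping $T_2$ and bounding it by $C(\mathbf{p'})/\mu$. The one place where the paper does more is the optimization in $\mu$: it carries out a full sign analysis of $g_d'(\mu)$ via the quartic $h_d(\mu)=(2d+4)\mu^4-(d+8)\mu^3+8\mu^2-4\mu+1$ and shows that an interior minimizer $\mu^*<\frac{3-\sqrt5}{2}$ exists if and only if $d\ge22$ (for $d\le21$ the minimum sits at the boundary and one recovers Theorem~\ref{thm.main}); your asymptotic ansatz $\mu=d^{-1/3}$ gives the same leading terms but does not by itself explain that specific threshold.
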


\begin{proof}
Following the proof of Theorem \ref{thm.main} but by substituting $T_2$ and $T_3$ into Equation (\ref{eq.T}), and if $P^{\min} \ge \frac{1}{\mu^2}$, we get:
\begin{align*}
T \le \frac{1-2\mu}{\mu(1-\mu)}C(\mathbf{p'}) + \frac{d}{1-\mu} A(\mathbf{p'}) + \left(1 -\frac{1-2\mu}{\mu(1-\mu)}\right) T_1 \ .
\end{align*}
Applying the bounds for $C(\mathbf{p'})$ and $A(\mathbf{p'})$ in Lemma \ref{lem.allocation}, and when $1 -\frac{1-2\mu}{\mu(1-\mu)} \le 0$, i.e., $\mu \le \frac{3-\sqrt{5}}{2}=1-\frac{1}{\phi}$, which makes the last term above at most zero, we can derive:
\begin{align*}
T \le \left(\frac{1-2\mu}{\mu(1-\mu)\rho} + \frac{d}{(1-\mu)(1-\rho)}\right) T_{\opt} \triangleq g_d(\mu, \rho) \cdot T_{\opt} \ .
\end{align*}

Let $X_{\mu}=\frac{1-2\mu}{\mu(1-\mu)}=\frac{1}{\mu}-\frac{1}{1-\mu}$ and $Y_{\mu}=\frac{1}{1-\mu}$. We can then write: $g_{d}(\mu,\rho) = \frac{X_{\mu}}{\rho}+\frac{dY_{\mu}}{1-\rho}$. By deriving $g_{d}(\mu,\rho)$ with respect to $\rho$ and setting the derivative to zero, we can get the best choice for $\rho$ to be $\rho^{*}(\mu)=\frac{\sqrt{X_{\mu}}}{\sqrt{X_{\mu}}+\sqrt{dY_{\mu}}}$. As $X_{\mu}, Y_{\mu} > 0$, clearly $\rho^{*}(\mu) \in (0, 1)$, thus is a valid choice. By substituting $\rho^{*}(\mu)$ back into $g_{d}(\mu,\rho)$ and simplifying, we can get:
\begin{align*}
g_{d}(\mu,\rho^{*}(\mu))=\big(\sqrt{X_{\mu}}+\sqrt{dY_{\mu}}\big)^{2} \triangleq g_{d}(\mu)^{2} \ .
\end{align*}

We will now minimize $g_{d}(\mu)=\sqrt{\frac{1}{\mu}-\frac{1}{1-\mu}}+\sqrt{\frac{d}{1-\mu}}$. By deriving $g_{d}(\mu)$ with respect to $\mu$ and factoring, we can get:
\begin{align*}
g'_{d}(\mu)=-\frac{(2d+4)\mu^{4}-(d+8)\mu^{3}+8\mu^{2}-4\mu+1}{2\mu(1-\mu)\sqrt{\mu(1-\mu)(1-2\mu)}\big(\mu\sqrt{d\mu(1-2\mu)}+(2\mu^{2}-2\mu+1)\big)} \ .
\end{align*}
As $2\mu^2-2\mu+1=\mu^2+(1-\mu)^2>0$ for any $\mu\in (0, 0.5)$, the denominator of $g'_{d}(\mu)$ is always positive. Thus, the sign of $g'_{d}(\mu)$ is the opposite of the sign of its numerator, which we define as:
\begin{align*}
h_{d}(\mu) \triangleq (2d+4)\mu^{4}-(d+8)\mu^{3}+8\mu^{2}-4\mu+1 \ .
\end{align*}

In the following, we will show that, if $d \leq 21$, $h_{d}(\mu)$ is always positive for any $\mu \in (0, \frac{3-\sqrt{5}}{2}]$, and thus the optimal choice is $\mu^{*}=\frac{3-\sqrt{5}}{2}$, which gives the same result as in Theorem \ref{thm.main}. Otherwise, if $d \ge 22$, there is a unique optimal choice $\mu^*\in (0, \frac{3-\sqrt{5}}{2})$, which satisfies $h_{d}(\mu^*)=0$. For convenience, we define $\mu^A = \frac{3-\sqrt{5}}{2}$ and $\mu^B = \frac{3}{8} < \mu^A$.

First, we can compute, for any $\mu \in (0, \mu^B]$, that:
\begin{align*}
h'_{d}(\mu) &= 4(2d+4)\mu^{3}-3(d+8)\mu^{2}+16\mu-4 \\
&= d\mu^{2}(8\mu-3)+4(2\mu-1)\big(\mu^{2}+(1-\mu)^{2}\big) < 0 \ .
\end{align*}

We can also compute, for any $\mu \in [\mu^B, \mu^A]$, that:
\begin{align*}
h''_{d}(\mu) &= 12(2d+4)\mu^{2}-6(d+8)\mu+16 \\
&\ge 12(2d+4)\cdot \Big(\frac{3}{8}\Big)^{2}-6(d+8)\cdot \Big(\frac{3-\sqrt{5}}{2}\Big)+16 \\
&\approx 1.083d + 4.416 > 0
\end{align*}

Thus, we can conclude the following:
\begin{itemize}
\item In $(0, \mu^B]$, $h_{d}(\mu)$ is a strictly decreasing function of $\mu$;
\item In $[\mu^B, \mu^A]$, $h_{d}(\mu)$ is a strictly convex function of $\mu$, and $h'_{d}(\mu)$ is a strictly increasing function of $\mu$.
\end{itemize}

We now distinguish two cases depending on the value of $d$.

Case (1): $d \le 21$. Since $h'_{d}(\mu)$ is an increasing function of $\mu$ in $[\mu^B, \mu^A]$, the largest value of $h_{d}'(\mu)$ is achieved at $\mu^A$. Also, $h_{d}'(\mu)$ is clearly an increasing function of $d$ for any $\mu> \mu^B$. Thus, for any $\mu\in (\mu^B, \mu^A]$, we have:
\begin{align*}
h'_{d}(\mu) \le h'_{d}(\mu^A) \le h'_{21}(\mu^A) \approx -0.328 < 0 \ .
\end{align*}

Thus, $h_d(\mu)$ is a strictly decreasing function of $\mu$ in $(0, \mu^A]$, and for any $\mu \in (0, \mu^A]$, if $d\le 21$, we have:
$$h_{d}(\mu) \ge h_{d}(\mu^{A}) \approx -0.013d+0.2786 \ge 0.0035 > 0 \ . $$

Since $g'_{d}(\mu)$ and $h_{d}(\mu)$ have opposite signs, this means $g'_{d}(\mu) < 0$, which implies that $g_d(\mu)$ is a decreasing function of $\mu$ in $(0, \mu^A]$. Therefore, the optimal $\mu$ to minimize $g_d(\mu)$ is $\mu^* = \mu^A = \frac{3-\sqrt{5}}{2}$. It can be verified that this choice yields the same approximation result as in Theorem \ref{thm.main}.

Case (2): $d \ge 22$. For any fixed $\mu$ in $(0,\mu^A]$, we can easily show that $h_{d}(\mu)$ is a decreasing function of $d$ (by deriving $h_{d}(\mu)$ with respect to $d$). Thus, we have $h_{d}(\mu^B) \leq h_{22}(\mu^B) \approx -0.008 <0$. Further, we have $h_{d}(0)=1>0$. Since $h_{d}(\mu)$ is a strictly decreasing function of $\mu$ in $(0,\mu^B]$, we know that $h_{d}(\mu)=0$ admits a unique solution $\mu^*$ in this interval.
Moreover, since $h_{d}(\mu)$ is a convex function in $[\mu^B, \mu^A]$, we have, for any $\mu \in [\mu^B, \mu^A]$, that:
\begin{align*}
h_{d}(\mu) \leq h_{22}(\mu) &\leq \max\big(h_{22}(\mu^B), h_{22}(\mu^{A})\big) \\
&\approx \max(-0.008,-0.01) < 0 \ .
\end{align*}

This shows that $h_{d}(\mu)>0$ in $(0,\mu^{*})$ and $h_{d}(\mu)<0$ in $(\mu^{*},\mu^{A}]$. Since $h_{d}(\mu)$ and $g'_{d}(\mu)$ have opposite signs, we get that $g_{d}(\mu)$ is a strictly decreasing function of $\mu$ in $(0,\mu^{*})$ and a strictly increasing function in $(\mu^{*},\mu^{A}]$. Thus, the optimal $\mu$ to minimize $g_d(\mu)$ is given by $\mu^*$.

As $\mu^{*}$ is the solution to a fourth-degree equation (i.e., $h_d(u) = 0$), its closed form, although exists, is too complicated to express. However, observing that when $d$ increases and if $\mu$ is small enough, the dominating negative term of $h_{d}(\mu)$ is $d\mu^{3}$ and the dominating positive term is $1$. We can then get an estimate of $\mu^{*}\approx \frac{1}{\sqrt[3]{d}}$, which gives an estimated approximation ratio: $g_d(\mu^*)^2\approx \frac{d\sqrt[3]{d}+2d\sqrt{1-\frac{2}{\sqrt[3]{d}}}+\sqrt[3]{d^{2}}-2\sqrt[3]{d}}{\sqrt[3]{d}-1} = d+3\sqrt[3]{d^2}+O(\sqrt[3]{d})$.
\end{proof}

Figure \ref{fig.ratio} plots the estimated ratio of Theorem \ref{thm.main2} in comparison with the actual ratio that results from the true value of $\mu^*$ (obtained numerically) for $22\le d \le 50$. We can see that the estimation is indeed very close to the actual value, and the result clearly improves upon the ratio of Theorem \ref{thm.main}.

\begin{figure}[h]
\center
\includegraphics[width=0.5\textwidth]{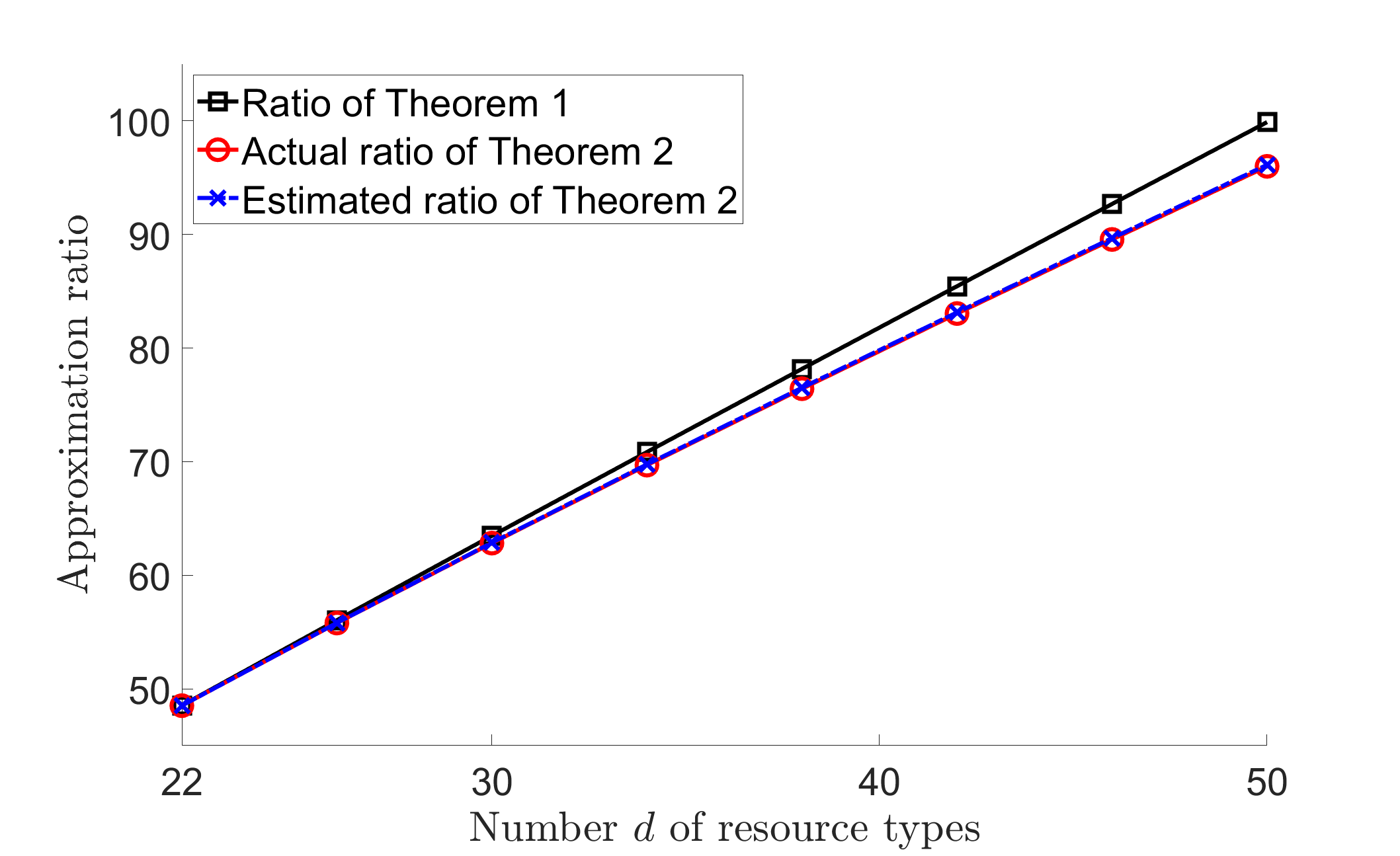}
	\caption{Comparison of the estimated ratio and the actual ratio of Theorem \ref{thm.main2} along with the ratio of Theorem \ref{thm.main} for $22\le d\le 50$.}\label{fig.ratio}
\end{figure}

Although Theorem \ref{thm.main2} holds for a large number of resource types (i.e., $d\ge 22$) and is unlikely to be practical in today's resource management systems, the result does have significant theoretical importance. In particular, it gives the first approximation for general list-based algorithm that is asymptotically tight up to the dominating factor $d$ in the context of multi-resource moldable job scheduling (see Theorem \ref{thm.lowerbound}).

\section{Improved Approximation Results for Some Special Graphs}\label{sec.improved}

In the preceding section, we have derived the approximation ratios of the multi-resource scheduling algorithm for general graphs.
In this section, we will show improved approximation results for some special graphs, namely, series-parallel graphs or trees, and independent jobs without any precedence constraints.

\subsection{Results for SP Graphs or Trees}

We first consider jobs whose precedence constraints form a series-parallel graph or a tree. A directed acyclic graph (DAG) is a \emph{series-parallel (SP) graph} \cite{Bodlaender96_SPdags} if it has only two nodes (i.e., a source and a sink) connected by an edge, or can be constructed (recursively) by a series composition or a parallel composition of two SP graphs.\footnote{Given two SP graphs $G_1$ and $G_2$, the \emph{parallel composition} is the union of the two graphs while merging their sources to create the new source and merging their sinks to create the new sink, and the \emph{series composition} merges the sink of $G_1$ with the source of $G_2$ and uses the source of $G_1$ as the new source and the sink of $G_2$ as the new sink.
}
Trees are simply special cases of general SP graphs.

In this case, we rely on an FPTAS (Fully Polynomial-Time Approximation Scheme) proposed in \cite{Lepere01_DAG} to find a near-optimal resource allocation.
The algorithm was proposed in the context of a single resource type, but can be readily adapted to work for multiple types of resources (by first discarding the subset of dominated resource allocations as shown in Step 1 of Algorithm \ref{alg.allocation}). In essence, the FPTAS first decomposes an SP graph into atomic parts, then uses dynamic programming to decide if an allocation $\mathbf{p'}$ that satisfies $L(\mathbf{p'}) \le X$ can be found for a positive
integer $X$, and finally performs a binary search on $X$. The following lemma shows the result. More details about the algorithm can be found in \cite{Lepere01_DAG}.

\begin{lemma}\label{lem.SP}
For a set of jobs whose precedence constraints form a series-parallel graph or a tree, and for any $\epsilon \ge 0$, an FPTAS (i.e., polynomial in $1/\epsilon$) exists, which can compute a resource allocation $\mathbf{p'} = (p'_1, p'_2,\dots,p'_n)$ that satisfies:
\begin{align*}
L(\mathbf{p'}) = \max(A(\mathbf{p'}), C(\mathbf{p'})) \le (1+\epsilon) \!\cdot\! L_{\min} \le (1+\epsilon) \!\cdot\! T_{\opt} \ .
\end{align*}
\end{lemma}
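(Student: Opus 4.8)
The plan is to adapt the FPTAS of Lep\`{e}re et al.~\cite{Lepere01_DAG} from the single-resource setting to our $d$-resource setting, where the only conceptual change is that the ``cost'' of an allocation is now the average area $a_j(\resources{j})$ over all resource types rather than the work on a single resource. The key observation that makes this transfer possible is that the quantity $L(\mathbf{p}) = \max(A(\mathbf{p}), C(\mathbf{p}))$ we want to minimize has exactly the same structure as in the single-resource case: $C(\mathbf{p})$ is a bottleneck (critical-path) objective and $A(\mathbf{p}) = \sum_j a_j(p_j)$ is an additive objective over the jobs. So the dynamic-programming machinery of \cite{Lepere01_DAG}, which works for SP graphs by exploiting their recursive decomposition, carries over essentially verbatim once the per-job cost table is replaced.

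Concretely, the steps I would carry out are: (1) Preprocessing: for each job $j$, discard the dominated allocations $\mathcal{D}_j$ as in Step 1 of Algorithm~\ref{alg.allocation}, leaving the non-dominated set $\mathcal{N}_j$; by Lemma~\ref{lem.non-dominate} this loses nothing, since an optimal allocation achieving $L_{\min}$ can be taken non-dominated. Each job then has at most $Q = \prod_i P^{(i)}$ alternatives, which is polynomial in the input under Assumption~1 (integral resources). (2) Decision procedure: for a guessed integer bound $X$, run the SP-graph dynamic program that, processing the recursive series/parallel decomposition bottom-up, computes for each sub-SP-graph and each achievable rounded critical-path value the minimum total area, keeping only critical-path values that are multiples of a suitably small granularity $\delta = \epsilon X / n$ (this is the standard rounding that turns the problem into an FPTAS and keeps the table size polynomial in $1/\epsilon$). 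The procedure either returns an allocation $\mathbf{p'}$ with $C(\mathbf{p'}) \le (1+\epsilon) X$ and $A(\mathbf{p'}) \le X$, or certifies no allocation with $L(\mathbf{p}) \le X$ exists. (3) Binary search on $X$ over the range $[1, \sum_j \max\text{-}t_j]$ (or the appropriate polynomial range), to locate the smallest feasible $X$, which is at most $L_{\min}$. (4) Conclude: the returned $\mathbf{p'}$ satisfies $L(\mathbf{p'}) = \max(A(\mathbf{p'}), C(\mathbf{p'})) \le (1+\epsilon) L_{\min}$, and then invoke Lemma~\ref{lem.makespan_lb} ($L_{\min} \le T_{\opt}$) to finish.

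The main obstacle — and really the only non-routine point — is verifying that the dynamic program for SP graphs correctly handles the two compositions for our objective. Under a parallel composition the critical path of the union is the max of the two sub-critical-paths, so one merges the two area-minimizing tables by a max-convolution on the critical-path coordinate and a sum on the area; under a series composition one adds the critical-path values and adds the areas. Both operations preserve the ``non-dominated frontier'' structure of (critical-path, area) pairs and the polynomial table size after rounding, exactly as in \cite{Lepere01_DAG}. The only thing to be careful about is the rounding error analysis: each of the $O(n)$ series compositions can introduce an additive error of at most $\delta$ in the critical-path estimate, so the total error is at most $n\delta = \epsilon X \le \epsilon L_{\min}$, which yields the claimed $(1+\epsilon)$ factor on the critical-path side while the area side is exact. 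Since none of this depends on $d$ beyond the (polynomially bounded) size of the per-job alternative set, the adaptation is immediate, and I would present it as a short argument citing \cite{Lepere01_DAG} for the detailed DP recurrences rather than reproducing them.
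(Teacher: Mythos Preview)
Your proposal is correct and follows essentially the same approach as the paper: the paper does not give a proof for this lemma but merely sketches that one adapts the FPTAS of \cite{Lepere01_DAG} by first discarding dominated allocations, then using the SP-decomposition dynamic program to decide whether an allocation with $L(\mathbf{p'}) \le X$ exists, and finally binary-searching on $X$. Your write-up is in fact more detailed than the paper's, spelling out the series/parallel merge rules and the rounding-error accounting that the paper leaves implicit in the citation.
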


We can now use the above FPTAS to replace Step 2 in resource allocation (Algorithm \ref{alg.allocation}) and combine it with list scheduling (Algorithm \ref{alg.list}). The following theorem shows the approximation ratio for any number $d$ of resource types.

\begin{theorem}\label{thm.sp_main}
For any $d\ge 1$ and if $P^{\min} \ge 7$, the performance of the multi-resource scheduling algorithm for SP graphs or trees satisfies the following:
\begin{align*}
\frac{T}{T_{\opt}} \le (1+\epsilon)\cdot \left(\phi d+1 \right) \le (1+\epsilon)\cdot \left(1.619 d + 1 \right) \ ,
\end{align*}
where $\phi = \frac{1+\sqrt{5}}{2}$ is the golden ratio. The result is achieved at $\mu^* = 1-\frac{1}{\phi}\approx 0.382$.
\end{theorem}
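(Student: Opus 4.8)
The plan is to mirror the proof of Theorem~\ref{thm.main}, replacing the bound on $C(\mathbf{p'})$ and $A(\mathbf{p'})$ from Lemma~\ref{lem.allocation} with the (much stronger) simultaneous bound from Lemma~\ref{lem.SP}. Recall that in the general case the resource-allocation phase only guarantees $C(\mathbf{p'}) \le T_{\opt}/\rho$ and $A(\mathbf{p'}) \le T_{\opt}/(1-\rho)$, which forces the loss of the parameter $\rho$ (and hence the extra $2\sqrt{\phi d}$ term). For SP graphs and trees, the FPTAS gives both $C(\mathbf{p'}) \le (1+\epsilon) T_{\opt}$ and $A(\mathbf{p'}) \le (1+\epsilon) T_{\opt}$ at once, so morally we are in the ``$\rho = 1$'' regime and the cross term disappears.

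Concretely, first I would reuse the List Scheduling analysis verbatim: Lemma~\ref{lem.cp_bound} gives $T_1 + \mu T_2 \le C(\mathbf{p'})$ and Lemma~\ref{lem.area_bound} gives $\mu T_2 + (1-\mu) T_3 \le d \cdot A(\mathbf{p'})$, both requiring $P^{\min} \ge 1/\mu^2$. Substituting $T_1$ and $T_3$ into $T = T_1 + T_2 + T_3$ (exactly as in the proof of Theorem~\ref{thm.main}) yields
\[
T \le C(\mathbf{p'}) + \frac{d}{1-\mu} A(\mathbf{p'}) + \left(1 - \mu - \frac{\mu}{1-\mu}\right) T_2 \ .
\]
Then, choosing $\mu \ge 1 - \frac{1}{\phi} = \frac{3-\sqrt 5}{2}$ so that $(1-\mu)^2 \le \mu$ makes the coefficient of $T_2$ nonpositive, and we may drop that term. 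Now I plug in the FPTAS bounds: $C(\mathbf{p'}) \le (1+\epsilon)T_{\opt}$ and $A(\mathbf{p'}) \le (1+\epsilon)T_{\opt}$, giving
\[
T \le (1+\epsilon)\left(1 + \frac{d}{1-\mu}\right) T_{\opt} \ .
\]
The right-hand coefficient $1 + \frac{d}{1-\mu}$ is increasing in $\mu$, so the best admissible choice is the smallest $\mu$ satisfying the constraint, namely $\mu^* = 1 - \frac{1}{\phi}$, which makes $\frac{1}{1-\mu^*} = \phi$ and yields the ratio $(1+\epsilon)(\phi d + 1)$. Finally I check the side condition: $1/(\mu^*)^2 = \phi^2 \approx 6.854 < 7$, so $P^{\min} \ge 7$ suffices, matching the hypothesis.

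I do not expect a serious obstacle here, since essentially everything reduces to the already-established list-scheduling lemmas plus the FPTAS of Lemma~\ref{lem.SP}; the only real content is the bookkeeping that shows the cross term in $T_2$ vanishes at $\mu^* = 1-1/\phi$ and that the numerical condition on $P^{\min}$ is met. The one point deserving a line of care is ensuring the FPTAS of~\cite{Lepere01_DAG} is validly applied to the multi-resource setting: one must first discard dominated allocations (Step~1 of Algorithm~\ref{alg.allocation}) so that the monotonicity needed by the dynamic program holds, and one must verify that the quantity $L(\mathbf{p}) = \max(A(\mathbf{p}), C(\mathbf{p}))$ it minimizes is exactly our lower bound $L_{\min}$ — which is precisely the content already asserted in Lemma~\ref{lem.SP}. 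With that in hand, the proof is a direct specialization of Theorem~\ref{thm.main}'s argument with $\rho$ effectively removed.
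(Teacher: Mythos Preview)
Your proposal is correct and follows essentially the same route as the paper's proof: substitute the critical-path and area bounds into $T = T_1 + T_2 + T_3$, drop the $T_2$ term for $\mu \ge 1 - 1/\phi$, apply the FPTAS bound from Lemma~\ref{lem.SP} in place of Lemma~\ref{lem.allocation}, and minimize over $\mu$. One trivial correction: $1/(\mu^*)^2 = \phi^4 \approx 6.854$, not $\phi^2$ (your numerical value is right, and only the area bound actually needs the $P^{\min}\ge 1/\mu^2$ hypothesis).
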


\begin{proof}
Following the proof of Theorem \ref{thm.main} by substituting $T_1$ from Lemma (\ref{lem.cp_bound}) and $T_3$ from Lemma (\ref{lem.area_bound}) into $T = T_1 + T_2 + T_3$, and if $P^{\min} \ge \frac{1}{\mu^2}$, we get:
\begin{align*}
T \le C(\mathbf{p'}) + \frac{d}{1-\mu} A(\mathbf{p'}) + \left(1-\mu -\frac{\mu}{1-\mu}\right) T_2 \ .
\end{align*}

Then, by applying the bounds in Lemma \ref{lem.SP}, and when $(1-\mu)^2 \le \mu$, i.e., $\mu \ge \frac{3-\sqrt{5}}{2}=1-\frac{1}{\phi}$, we can derive:
\begin{align*}
T \le (1+\epsilon)\cdot\left(1 + \frac{d}{(1-\mu)}\right) T_{\opt} \triangleq f_d(\mu) \cdot T_{\opt} \ .
\end{align*}

Clearly, $f_d(\mu)$ is an increasing function of $\mu$ for all $d$. Thus, the minimum value is obtained by setting $\mu^{*}=1-\frac{1}{\phi}$. In this case, the approximation ratio is given by $f_d(\mu^*) = (1+\epsilon)\cdot \left(\phi d+1 \right)$, with the condition $P^{\min} \ge \frac{1}{(\mu^*)^2} \approx 6.854$.
\end{proof}

The approximation ratio can be improved with $d\ge 4$ resource types, as shown in the following theorem.

\begin{theorem}\label{thm.sp_main2}
For any $d\ge 4$ and if $P^{\min} \ge d + 2\sqrt{d-1}$, the performance of the multi-resource scheduling algorithm for SP graphs or trees satisfies the following:
\begin{align*}
\frac{T}{T_{\opt}} \le (1+\epsilon)\cdot \left(d+2\sqrt{d-1}\right) \ .
\end{align*}
The result is achieved at $\mu^* = \frac{1}{\sqrt{d-1}+1}$.
\end{theorem}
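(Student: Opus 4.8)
\textbf{Proof plan for Theorem \ref{thm.sp_main2}.}

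The plan is to re-run the analysis of Theorem \ref{thm.main2} but with the tighter resource allocation afforded by Lemma \ref{lem.SP}, in which both $C(\mathbf{p'})$ and $A(\mathbf{p'})$ are simultaneously bounded by $(1+\epsilon) L_{\min} \le (1+\epsilon) T_{\opt}$ (rather than being split across the $\rho$-tradeoff of Lemma \ref{lem.allocation}). First I would start from the same interval decomposition $T = T_1 + T_2 + T_3$ and the two interval bounds: the Critical-Path Bound $T_1 + \mu T_2 \le C(\mathbf{p'})$ (Lemma \ref{lem.cp_bound}) and the Area Bound $\mu T_2 + (1-\mu) T_3 \le d\cdot A(\mathbf{p'})$ (Lemma \ref{lem.area_bound}, valid since $P^{\min} \ge d + 2\sqrt{d-1} \ge 1/\mu^2$ for the chosen $\mu$). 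Solving the Area Bound for $T_3$ and substituting, together with $T_1$ from the Critical-Path Bound, gives
\begin{align*}
T \le C(\mathbf{p'}) + \frac{d}{1-\mu}A(\mathbf{p'}) + \left(1 - \mu - \frac{\mu}{1-\mu}\right) T_2 \ ,
\end{align*}
exactly as in the earlier proofs. Here, however, I would keep the coefficient $\left(1-\mu-\frac{\mu}{1-\mu}\right)$ and note it is nonnegative precisely when $(1-\mu)^2 \ge \mu$, i.e. $\mu \le \frac{3-\sqrt5}{2}$; since the target $\mu^* = \frac{1}{\sqrt{d-1}+1}$ lies in $(0, \frac{3-\sqrt5}{2})$ for $d \ge 4$, the $T_2$ term does \emph{not} vanish, so I must also bound $T_2$. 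From the Area Bound, $\mu T_2 \le d A(\mathbf{p'})$, hence $T_2 \le \frac{d}{\mu} A(\mathbf{p'})$, but this is too lossy; instead I would use the complementary substitution — solve the Critical-Path Bound for $T_1$ and eliminate it, mirroring the first display in the proof of Theorem \ref{thm.main2}, to get
\begin{align*}
T \le \frac{1-2\mu}{\mu(1-\mu)}C(\mathbf{p'}) + \frac{d}{1-\mu}A(\mathbf{p'}) + \left(1 - \frac{1-2\mu}{\mu(1-\mu)}\right) T_1 \ ,
\end{align*}
whose last coefficient is $\le 0$ exactly when $\mu \le \frac{3-\sqrt5}{2}$, the regime we are in. Dropping that term and applying Lemma \ref{lem.SP} to replace both $C(\mathbf{p'})$ and $A(\mathbf{p'})$ by $(1+\epsilon) T_{\opt}$ yields
\begin{align*}
T \le (1+\epsilon)\left(\frac{1-2\mu}{\mu(1-\mu)} + \frac{d}{1-\mu}\right) T_{\opt} \triangleq (1+\epsilon)\cdot g_d(\mu)\cdot T_{\opt} \ ,
\end{align*}
which is the SP/tree analogue of $g_d(\mu,\rho)$ but with the $\rho$-factors removed ($\rho$ effectively $=1$ on each term, since the FPTAS bounds both quantities at once).

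The remaining work is a one-variable minimization of $g_d(\mu) = \frac{1-2\mu}{\mu(1-\mu)} + \frac{d}{1-\mu} = \frac{1}{\mu} - \frac{1}{1-\mu} + \frac{d}{1-\mu} = \frac{1}{\mu} + \frac{d-1}{1-\mu}$ over $\mu \in (0, \frac{3-\sqrt5}{2}]$. This simplification is the crux: $g_d(\mu)$ collapses to the clean form $\frac{1}{\mu} + \frac{d-1}{1-\mu}$, which is a textbook convex optimization. Setting $g_d'(\mu) = -\frac{1}{\mu^2} + \frac{d-1}{(1-\mu)^2} = 0$ gives $(1-\mu)^2 = (d-1)\mu^2$, i.e. $1-\mu = \sqrt{d-1}\,\mu$, so $\mu^* = \frac{1}{\sqrt{d-1}+1}$, and one checks $g_d''(\mu) > 0$ so this is the global minimum on $(0,1)$. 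I then need to verify $\mu^* \le \frac{3-\sqrt5}{2}$, which holds iff $\sqrt{d-1}+1 \ge \frac{2}{3-\sqrt5} = \frac{3+\sqrt5}{2} \approx 2.618$, i.e. $\sqrt{d-1}\ge 1.618$, i.e. $d \ge 1+\phi^2 \approx 3.618$, hence for all integer $d \ge 4$ — this is exactly where the hypothesis $d\ge4$ enters. Plugging $\mu^*$ back in, $g_d(\mu^*) = (\sqrt{d-1}+1) + \frac{d-1}{1 - \frac{1}{\sqrt{d-1}+1}} = (\sqrt{d-1}+1) + \frac{(d-1)(\sqrt{d-1}+1)}{\sqrt{d-1}} = (\sqrt{d-1}+1) + (\sqrt{d-1})(\sqrt{d-1}+1) = (\sqrt{d-1}+1)^2 = d + 2\sqrt{d-1}$. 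Finally I would record the condition $P^{\min} \ge 1/(\mu^*)^2 = (\sqrt{d-1}+1)^2 = d + 2\sqrt{d-1}$, which matches the theorem's hypothesis exactly, closing the argument.

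I do not anticipate a serious obstacle here; unlike Theorem \ref{thm.main2}, whose optimal $\mu$ solves an irreducible quartic, the SP/tree case decouples cleanly because the FPTAS bounds $C(\mathbf{p'})$ and $A(\mathbf{p'})$ \emph{simultaneously}, eliminating the $\rho$ parameter and reducing everything to minimizing $\frac{1}{\mu} + \frac{d-1}{1-\mu}$. The only point requiring minor care is confirming that the chosen $\mu^*$ stays in the admissible range $(0, \frac{3-\sqrt5}{2}]$ so that the dropped $T_1$-coefficient is genuinely nonpositive — this is where the $d\ge 4$ threshold comes from — and checking that the side condition $P^{\min}\ge 1/(\mu^*)^2$ coincides with the stated hypothesis $P^{\min} \ge d + 2\sqrt{d-1}$, which it does identically.
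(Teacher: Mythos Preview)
Your proposal is correct and follows essentially the same route as the paper's own proof: substitute to obtain the display with the $T_1$ residual, observe its coefficient is nonpositive for $\mu \le \tfrac{3-\sqrt5}{2}$, apply Lemma~\ref{lem.SP} to bound both $C(\mathbf{p'})$ and $A(\mathbf{p'})$ by $(1+\epsilon)T_{\opt}$, simplify $g_d(\mu)=\tfrac{1}{\mu}+\tfrac{d-1}{1-\mu}$, optimize at $\mu^*=\tfrac{1}{\sqrt{d-1}+1}$, and verify both the $d\ge 4$ admissibility and the $P^{\min}$ condition. Your exposition is in fact somewhat more thorough than the paper's (e.g., the explicit check that $\mu^*\le \tfrac{3-\sqrt5}{2}$ iff $d\ge 1+\phi^2\approx 3.618$, and the factorization $g_d(\mu^*)=(\sqrt{d-1}+1)^2$), but the argument is the same.
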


\begin{proof}
Following the proof of Theorem \ref{thm.main} but by substituting $T_2$ and $T_3$ into $T = T_1 + T_2 + T_3$, and if $P^{\min} \ge \frac{1}{\mu^2}$, we get:
\begin{align*}
T \le \frac{1-2\mu}{\mu(1-\mu)}C(\mathbf{p'}) + \frac{d}{1-\mu} A(\mathbf{p'}) + \left(1 -\frac{1-2\mu}{\mu(1-\mu)}\right) T_1 \ .
\end{align*}

Applying the bounds in Lemma \ref{lem.SP}, and when $1 -\frac{1-2\mu}{\mu(1-\mu)} \le 0$, i.e., $\mu \le \frac{3-\sqrt{5}}{2}$, we can derive:
\begin{align*}
T &\le (1+\epsilon)\cdot\left(\frac{1-2\mu}{\mu(1-\mu)} + \frac{d}{1-\mu}\right) T_{\opt} \\
&= (1+\epsilon)\cdot\left(\frac{1}{\mu} + \frac{d-1}{1-\mu}\right) \triangleq g_d(\mu) \cdot T_{\opt} \ .
\end{align*}

By setting $g'_{d}(\mu) = -\frac{1}{\mu^2} + \frac{d-1}{(1-\mu)^2} = 0$ and by checking that $g''_{d}(\mu) > 0$, we get $\mu^* = \frac{1}{\sqrt{d-1}+1}$, which is at most $\frac{3-\sqrt{5}}{2}$ for $d\ge 4$. Thus, with the condition $P^{\min} \ge \frac{1}{(\mu^*)^2} = d + 2\sqrt{d-1}$ and $d\ge 4$, we get the approximation ratio:
\begin{align*}
g_{d}(\mu^*) &= (1+\epsilon)\cdot \left(\sqrt{d-1}+1 + \frac{d-1}{1-\frac{1}{\sqrt{d-1}+1}}\right) \\
&= (1+\epsilon) \cdot \left(d+2\sqrt{d-1}\right) \ .   \qedhere
\end{align*}
\end{proof}

\subsection{Results for Independent Jobs}

We finally consider independent jobs without any precedence constraints. For this case, Sun et al. \cite{Sun18_multiresource} presented a $2d$-approximation algorithm for any $d\ge 1$, while we show improved results for $d \ge 3$.
Here, we rely on an optimal multi-resource allocation algorithm proposed in \cite{Sun18_multiresource} as Step 2 of our Algorithm \ref{alg.allocation}. The algorithm computes the resource allocation in polynomial time as shown in the lemma below. More details of the algorithm can be found in \cite{Sun18_multiresource}.

\begin{lemma}\label{lem.independent}
For a set of independent jobs, a resource allocation $\mathbf{p'} = (p'_1, p'_2,\dots,p'_n)$ can be found in polynomial time, such that:
\begin{align*}
L(\mathbf{p'}) = \max(A(\mathbf{p'}), C(\mathbf{p'})) = L_{\min} \le T_{\opt} \ ,
\end{align*}
where $C(\allresources') = \max_{j=1\dots n} t_j(p'_j)$ denotes the maximum execution time of any job under allocation $\allresources'$, which becomes the critical path when there is no precedence constraint.
\end{lemma}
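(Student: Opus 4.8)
The plan is to exploit the structural simplification that independence brings: with no precedence edges the critical path collapses to $C(\mathbf{p}) = \max_j t_j(p_j)$, while the total area $A(\mathbf{p}) = \sum_j a_j(p_j)$ is a sum of per-job terms. Minimizing $L(\mathbf{p}) = \max(A(\mathbf{p}), C(\mathbf{p}))$ therefore reduces to a one-parameter search over the value of the critical path, with the area part solved greedily and independently for each job once that parameter is fixed. This is, in essence, the allocation procedure of \cite{Sun18_multiresource}.

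Concretely, I would first discard, for each job $j$, the set $\mathcal{D}_j$ of dominated allocations (Step 1 of Algorithm \ref{alg.allocation}), keeping $\mathcal{N}_j$; by Lemma \ref{lem.non-dominate} this loses nothing for attaining $L_{\min}$. Let $\mathcal{T} = \{ t_j(p_j) : 1 \le j \le n,\ p_j \in \mathcal{N}_j \}$ be the set of all distinct execution-time values, which has size polynomial in the input. For each threshold $D \in \mathcal{T}$, build an allocation $\mathbf{p}(D)$ job by job: for job $j$, among all $p_j \in \mathcal{N}_j$ with $t_j(p_j) \le D$, pick one of minimum area $a_j(p_j)$; if some job admits no such allocation, discard this value of $D$. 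Because the jobs are independent and $A(\mathbf{p})$ decomposes as a sum, $\mathbf{p}(D)$ simultaneously minimizes $A(\mathbf{p})$ over all non-dominated allocations $\mathbf{p}$ with $C(\mathbf{p}) \le D$. Finally compute $L(\mathbf{p}(D)) = \max(A(\mathbf{p}(D)), C(\mathbf{p}(D)))$ for every feasible $D \in \mathcal{T}$ and output the allocation $\mathbf{p'}$ attaining the minimum.

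Correctness rests on a short exchange argument. Let $\mathbf{p}^*$ be a non-dominated allocation with $L(\mathbf{p}^*) = L_{\min}$ (which exists by Lemma \ref{lem.non-dominate}), and let $C^* = C(\mathbf{p}^*) = \max_j t_j(p^*_j)$; since $C^* = t_{j_0}(p^*_{j_0})$ for the maximizing job $j_0$ and $p^*_{j_0} \in \mathcal{N}_{j_0}$, we have $C^* \in \mathcal{T}$. Every job satisfies $t_j(p^*_j) \le C^*$, so $\mathbf{p}^*$ is among the allocations over which $\mathbf{p}(C^*)$ minimizes the area; hence $A(\mathbf{p}(C^*)) \le A(\mathbf{p}^*)$ and also $C(\mathbf{p}(C^*)) \le C^*$, giving $L(\mathbf{p}(C^*)) \le \max(A(\mathbf{p}^*), C^*) = L_{\min}$. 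Since no allocation can do better than $L_{\min}$, equality holds, so the enumeration returns $L(\mathbf{p'}) = L_{\min}$; the identity $C(\mathbf{p'}) = \max_j t_j(p'_j)$ is immediate, and $L_{\min} \le T_{\opt}$ follows from Lemma \ref{lem.makespan_lb}. Polynomiality is clear, since $|\mathcal{T}|$ and each per-job minimization are polynomial in the size of the execution-time/area table.

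The only genuinely non-routine step — and hence the main obstacle — is the exchange argument, specifically the observation that one may assume without loss of generality that an optimal allocation minimizes total area subject to its own critical-path value; this is precisely what makes the finite enumeration over $\mathcal{T}$ exhaustive rather than merely a heuristic. The remaining ingredients (the per-job decomposition of $A$, reusing the discarding of dominated allocations, and the appeal to the makespan lower bound) are straightforward. A minor point to dispatch carefully is that thresholds $D$ for which some job has no feasible allocation can be ignored, since the value $C^*$ that realizes the optimum is always feasible.
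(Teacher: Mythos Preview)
Your proposal is correct and essentially reconstructs the allocation algorithm of \cite{Sun18_multiresource} that the paper invokes for this lemma: the paper does not give its own proof but defers entirely to that reference, and your threshold-enumeration-plus-per-job-greedy construction, together with the exchange argument showing that the optimal $C^*$ lies in $\mathcal{T}$, is precisely that algorithm's correctness proof. There is no gap; the polynomiality claim is justified because the input already tabulates all execution-time values, so $|\mathcal{T}|$ is bounded by the table size.
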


For independent jobs, while the area bound (Lemma \ref{lem.area_bound}) remains unchanged, we show a modified critical-path bound.
\begin{lemma}[Modified Critical-Path Bound]\label{lem.modifiedcp_bound}
For any choice of $\mu \in (0, 0.5)$, we have:
\begin{itemize}
\item If $\mathcal{I}_1 = \emptyset$, $\mu T_2 \le C(\mathbf{p'})$;
\item If $\mathcal{I}_1 \neq \emptyset$, $T_1 + T_2 \le C(\mathbf{p'})$.
\end{itemize}
\end{lemma}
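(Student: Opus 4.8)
The plan is to adapt the proof of the Critical-Path Bound (Lemma~\ref{lem.cp_bound}) to exploit the fact that, for independent jobs, there is no real chain of precedences forcing jobs to wait: the "critical path'' is just the single longest job. First I would recall the key structural fact established inside the proof of Lemma~\ref{lem.cp_bound}: during any interval $I \in \mathcal{I}_1 \cup \mathcal{I}_2$ there is enough free resource on every type to start any additional job that is allocated at most $\lceil \mu P^{(i)} \rceil$, hence the ready queue $\mathcal{Q}$ is empty during such intervals. For independent jobs, an empty ready queue means \emph{every} job has either already completed or is currently running --- there are no jobs still blocked by predecessors. So during any interval in $\mathcal{I}_1 \cup \mathcal{I}_2$, all not-yet-finished jobs are running simultaneously.

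Next I would handle the two cases. Consider the \emph{last} interval $I^\star$ belonging to $\mathcal{I}_1 \cup \mathcal{I}_2$ (if this set is empty both claimed inequalities are vacuous, since then $T_1=T_2=0$). Some job $j^\star$ is running throughout $I^\star$; because $\mathcal{Q}$ was already empty at the start of $I^\star$ and nothing new can ever become ready (independence), this same job $j^\star$ --- or rather, I should track a job that is running in the last interval of each category --- and more to the point, \emph{whatever job finishes last overall} was already released at time $0$ and, since $\mathcal{Q}$ is empty throughout $\mathcal{I}_1\cup\mathcal{I}_2$, was running during every interval of $\mathcal{I}_1\cup\mathcal{I}_2$ that occurs after its own start time. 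Since all jobs are released at time $0$, pick the job $j^\star$ that is running in the very last interval of $\mathcal{I}_1\cup\mathcal{I}_2$; it must have been running during \emph{all} intervals of $\mathcal{I}_1\cup\mathcal{I}_2$ (any interval of $\mathcal{I}_1\cup\mathcal{I}_2$ has an empty queue, so $j^\star$, being unfinished then, is running). Therefore $\sum_{I\in\mathcal{I}_1\cup\mathcal{I}_2}|I| \le t_{j^\star}(p_{j^\star})$, i.e.\ $T_1+T_2 \le t_{j^\star}(p_{j^\star})$.

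Now split on whether $j^\star$ is adjusted. If $\mathcal{I}_1 \ne \emptyset$, then in particular some interval has utilization below $\lceil \mu P^{(i)}\rceil$ on every type while $j^\star$ is running, so $j^\star$ is unadjusted on every resource type (an adjusted allocation would use $\lceil \mu P^{(i)}\rceil$ on some type), whence $t_{j^\star}(p_{j^\star}) = t_{j^\star}(p'_{j^\star}) \le \max_j t_j(p'_j) = C(\mathbf{p'})$, giving $T_1 + T_2 \le C(\mathbf{p'})$. If $\mathcal{I}_1 = \emptyset$, then $T_1 = 0$ and I only need $\mu T_2 \le C(\mathbf{p'})$; here $j^\star$ may be adjusted, so Inequality~(\ref{eq.tj}) of Lemma~\ref{lem.adjust} gives $\mu\, t_{j^\star}(p_{j^\star}) \le t_{j^\star}(p'_{j^\star}) \le C(\mathbf{p'})$, and combining with $T_2 \le t_{j^\star}(p_{j^\star})$ yields $\mu T_2 \le C(\mathbf{p'})$. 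As in Lemma~\ref{lem.cp_bound}, the per-interval identity $|I| = t_{j^\star}(p_{j^\star})\cdot \beta_{j^\star,I}$ and $\sum_{I} \beta_{j^\star,I}=1$ make these sums rigorous.

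The main obstacle is the bookkeeping in the step where I claim that a single job is running throughout \emph{all} of $\mathcal{I}_1\cup\mathcal{I}_2$ simultaneously --- I need to be careful that $\mathcal{I}_1\cup\mathcal{I}_2$ need not be contiguous in time, and that "the job running in the last such interval'' is well defined and was indeed running in every earlier such interval; this uses crucially that every job is ready at time $0$ so that "unfinished and queue empty'' forces "running.'' A secondary subtlety is confirming the $\mathcal{I}_1\ne\emptyset \Rightarrow j^\star$ unadjusted implication: I must argue that $j^\star$, chosen as the last-finishing job, is running in some interval of $\mathcal{I}_1$, which follows because $\mathcal{I}_1\ne\emptyset$ and whichever interval of $\mathcal{I}_1$ we pick has an empty queue, so the still-unfinished $j^\star$ must be among the running jobs there, and low utilization on all types forces its allocation to be unadjusted.
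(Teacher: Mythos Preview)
Your proposal is correct and follows essentially the same approach as the paper's proof. The paper additionally observes that for independent jobs the intervals are temporally ordered as $\mathcal{I}_3$, then $\mathcal{I}_2$, then $\mathcal{I}_1$ (since once the queue empties, utilization can only decrease), and it picks the overall last-completing job; your argument bypasses this ordering and instead shows directly that any job running in the last $\mathcal{I}_1\cup\mathcal{I}_2$ interval must be running in every such interval, which yields the same bounds.
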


\begin{proof}
Recall that there are three categories of intervals $\mathcal{I}_1$, $\mathcal{I}_2$ and $\mathcal{I}_3$.
Based on the proof of Lemma \ref{lem.cp_bound}, during any interval $I \in \mathcal{I}_1 \cup \mathcal{I}_2$, there is no ready job in the queue. Since all jobs are independent, it means that all jobs have been scheduled. This implies that all intervals in $\mathcal{I}_2$ happen before all intervals in $\mathcal{I}_1$, since there is no new job arrival and jobs only complete. Further, all intervals in $\mathcal{I}_3$ happen before all intervals in $\mathcal{I}_2$ using the same argument. Now, consider a job $j$ that completes the last in the schedule. We know that $j$ must have started during $\mathcal{I}_3$ or at the beginning of $\mathcal{I}_2$. We consider two cases.

Case (1): $\mathcal{I}_1 = \emptyset$. In this case, job $j$ is executed during all intervals in $\mathcal{I}_2$ and it could be adjusted. Thus, according to Lemma~\ref{lem.adjust} (Inequality (\ref{eq.tj})), we have $\mu T_2 \le \mu \cdot t_{j}(p_{j}) \le t_{j}(p'_{j}) \le \max_{j=1\dots n} t_j(p'_j) = C(\allresources')$.

Case (2): $\mathcal{I}_1 \neq \emptyset$. In this case, job $j$ is executed during all intervals in $\mathcal{I}_2$ as well as all intervals in $\mathcal{I}_1$. Thus, job $j$ must be unadjusted (since it is executed during $\mathcal{I}_1$). Thus, we have $T_1 + T_2 \le t_{j}(p_{j}) = t_{j}(p'_{j}) \le \max_{j=1\dots n} t_j(p'_j) = C(\allresources')$.
\end{proof}

\begin{theorem}\label{thm.independent}
The performance of multi-resource scheduling for independent jobs satisfies $T/T_{\opt} \le r$, where:
\begin{align*}
r = \begin{cases}
2d,  & \text{~if~} d = 1, 2, \text{~and~} P^{\min} \ge 1 \\
1.619d+1, & \text{~if~} d = 3, \text{~and~} P^{\min} \ge 7 \\
d+2\sqrt{d-1}, & \text{~if~} d \ge 4, \text{~and~} P^{\min} \ge d+2\sqrt{d-1}
\end{cases}
\end{align*}
\end{theorem}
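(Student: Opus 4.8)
The plan is to run the same two-phase algorithm as in Section~\ref{sec.algorithms}, with two modifications: replace Step~2 of Algorithm~\ref{alg.allocation} by the exact multi-resource allocation of Lemma~\ref{lem.independent}, so that $C(\mathbf{p'}),A(\mathbf{p'})\le L(\mathbf{p'})=L_{\min}\le T_{\opt}$; and use the modified critical-path bound of Lemma~\ref{lem.modifiedcp_bound} in place of Lemma~\ref{lem.cp_bound}, keeping the area bound of Lemma~\ref{lem.area_bound} unchanged. Because Lemma~\ref{lem.modifiedcp_bound} splits into the sub-cases $\mathcal{I}_1=\emptyset$ and $\mathcal{I}_1\neq\emptyset$, the analysis will yield two estimates on $T/T_{\opt}$, and the algorithm's guarantee will be their maximum; the ratio $r$ is then obtained by minimizing this maximum over the free parameter $\mu\in(0,1/2)$.

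For the two estimates, I would fix $\mu\in(0,1/2)$, assume $P^{\min}\ge 1/\mu^2$ so that Lemma~\ref{lem.area_bound} applies and gives $(1-\mu)T_3\le dA(\mathbf{p'})-\mu T_2\le dT_{\opt}-\mu T_2$, and start from $T=T_1+T_2+T_3$ (Equation~(\ref{eq.T})). When $\mathcal{I}_1\neq\emptyset$, combining $T_1+T_2\le C(\mathbf{p'})\le T_{\opt}$ with $T_3\le dT_{\opt}/(1-\mu)$ (discarding the nonnegative term $\mu T_2$) gives $T\le\big(1+\tfrac{d}{1-\mu}\big)T_{\opt}$. When $\mathcal{I}_1=\emptyset$ (so $T_1=0$ and $\mu T_2\le C(\mathbf{p'})\le T_{\opt}$), substituting the bound on $T_3$ makes $T$ an increasing function of $T_2$ with slope $\tfrac{1-2\mu}{1-\mu}>0$, so replacing $T_2$ by $T_{\opt}/\mu$ yields $T\le\big(\tfrac{1-2\mu}{\mu(1-\mu)}+\tfrac{d}{1-\mu}\big)T_{\opt}=\big(\tfrac{1}{\mu}+\tfrac{d-1}{1-\mu}\big)T_{\opt}$. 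Hence $T/T_{\opt}\le\max\big(\tfrac{1}{\mu}+\tfrac{d-1}{1-\mu},\;1+\tfrac{d}{1-\mu}\big)$, and it remains to choose $\mu$ minimizing this maximum.

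The first term above is strictly convex on $(0,1/2)$ with minimizer $\mu_1^*=\tfrac{1}{1+\sqrt{d-1}}$ and minimum value $d+2\sqrt{d-1}$, while the second term is strictly increasing in $\mu$; solving $\tfrac{1}{\mu}-1=\tfrac{1}{1-\mu}$, i.e. $(1-\mu)^2=\mu$, shows the two terms coincide at the $d$-independent value $\mu=\tfrac{3-\sqrt5}{2}=1-\tfrac{1}{\phi}=\tfrac{1}{1+\phi}$, where both equal $\phi d+1$. For $d=3$ one checks $\mu_1^*=\tfrac{1}{1+\sqrt2}>1-\tfrac{1}{\phi}$, so the first term is still decreasing at $1-\tfrac{1}{\phi}$; hence the maximum is minimized at $\mu^*=1-\tfrac{1}{\phi}$, giving the ratio $\phi d+1\le 1.619d+1$ and requiring $P^{\min}\ge 1/(\mu^*)^2=(1+\phi)^2=\tfrac{7+3\sqrt5}{2}\approx 6.854$, hence $P^{\min}\ge7$ by integrality. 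For $d\ge4$ one has $\mu_1^*\le 1-\tfrac{1}{\phi}$ (equivalently $\sqrt{d-1}\ge\phi$, i.e. $d\ge\phi^2+1=\tfrac{5+\sqrt5}{2}\approx3.62$), so I would set $\mu^*=\mu_1^*$: the first term hits its global minimum $d+2\sqrt{d-1}$ while the second equals $1+d+\tfrac{d}{\sqrt{d-1}}$, which is at most $d+2\sqrt{d-1}$ since, after multiplying by $\sqrt{d-1}$ and rearranging, this is equivalent to $d^2-5d+5\ge0$, true for all $d\ge4$. This gives the ratio $d+2\sqrt{d-1}$ under $P^{\min}\ge 1/(\mu^*)^2=(1+\sqrt{d-1})^2=d+2\sqrt{d-1}$. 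Finally, for $d=1,2$ the hypothesis $P^{\min}\ge1$ is too weak to invoke Lemma~\ref{lem.area_bound} (which needs $P^{\min}\ge1/\mu^2>4$), so for these two cases I would simply cite the $2d$-approximation of Sun et al.~\cite{Sun18_multiresource}, valid for every $d\ge1$; nothing is lost since $2d<\phi d+1$ already for $d\le2$.

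The argument is mostly bookkeeping rather than a genuine obstacle; the step that needs care is the two-sided minimization of $\max\big(\tfrac{1}{\mu}+\tfrac{d-1}{1-\mu},\,1+\tfrac{d}{1-\mu}\big)$ over $\mu\in(0,1/2)$ --- in particular, noticing that the two branches of Lemma~\ref{lem.modifiedcp_bound} yield estimates that cross at the $d$-independent point $1-\tfrac{1}{\phi}$, which forces the regime split exactly at $d=4$ and makes $d=3$ the crossover case where $\mu^*=1-\tfrac{1}{\phi}$ is optimal rather than $\mu_1^*$, and then verifying the quadratic inequality $d^2-5d+5\ge0$ that keeps the $\mathcal{I}_1\neq\emptyset$ branch non-binding at $\mu_1^*$ once $d\ge4$.
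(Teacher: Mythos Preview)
Your proposal is correct and follows essentially the same approach as the paper: split into the two cases of Lemma~\ref{lem.modifiedcp_bound}, derive $f_d(\mu)=1+\tfrac{d}{1-\mu}$ and $g_d(\mu)=\tfrac{1}{\mu}+\tfrac{d-1}{1-\mu}$, take their maximum, and optimize over $\mu$ with the same choices $\mu^*=1-\tfrac{1}{\phi}$ for $d=3$ and $\mu^*=\tfrac{1}{1+\sqrt{d-1}}$ for $d\ge4$, while citing \cite{Sun18_multiresource} for $d\le2$. Your write-up is in fact more explicit than the paper's on two points the paper leaves implicit: that the crossover $f_d=g_d$ occurs at the $d$-independent value $1-\tfrac{1}{\phi}$, and that $f_d(\mu_1^*)\le g_d(\mu_1^*)$ for $d\ge4$ (your reduction to $d^2-5d+5\ge0$).
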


\begin{proof}
When $d = 1, 2$, we can just apply the multi-resource scheduling algorithm in \cite{Sun18_multiresource} to get $2d$-approximation. Otherwise, we consider both cases as stated in Lemma \ref{lem.modifiedcp_bound}.

Case (1): $\mathcal{I}_1 = \emptyset$. In this case, the makespan is given by $T = T_2 + T_3$. Substituting $\mu T_2 \le C(\mathbf{p'})$ from Lemma \ref{lem.modifiedcp_bound} and $\mu T_2 + (1-\mu) T_3 \le d\cdot A(\mathbf{p'})$ from Lemma \ref{lem.area_bound} into $T$, we get:
\begin{align*}
T &\le \frac{1-2\mu}{\mu(1-\mu)} C(\mathbf{p'}) + \frac{d}{1-\mu} A(\mathbf{p'}) \\
&\le \Big( \frac{1-2\mu}{\mu(1-\mu)} + \frac{d}{1-\mu} \Big) \cdot T_{\opt} \quad\quad (\text{by Lemma \ref{lem.independent}}) \\
&\triangleq g_d(\mu) \cdot T_{\opt} \ .
\end{align*}

Case (2): $\mathcal{I}_1 \neq \emptyset$. In this case, the makespan is given by $T = T_1 + T_2 + T_3$. Substituting $T_1 + T_2 \le C(\mathbf{p'})$ from Lemma \ref{lem.modifiedcp_bound} and $\mu T_2 + (1-\mu) T_3 \le d\cdot A(\mathbf{p'})$ from Lemma \ref{lem.area_bound} into $T$, we get:
\begin{align*}
T &\le C(\mathbf{p'}) + \frac{d}{1-\mu} A(\mathbf{p'}) - \frac{\mu}{1-\mu} T_2 \\
&\le \Big( 1 + \frac{d}{1-\mu} \Big) \cdot T_{\opt} \quad\quad (\text{by Lemma \ref{lem.independent}}) \\
&\triangleq f_d(\mu) \cdot T_{\opt} \ .
\end{align*}

The overall approximation ratio is given by $\max(f_d(\mu), g_d(\mu))$, with the condition $P^{\min} \ge \frac{1}{\mu^2}$. Thus,
when $d=3$, by following the proof of Theorem \ref{thm.sp_main} and setting $\mu^* \approx 0.382$, the ratio is $f_d(\mu^*) \le 1.619d+1$.
When $d\ge 4$, we can follow the proof of Theorem \ref{thm.sp_main2} by setting $\mu^* = \frac{1}{\sqrt{d-1}+1}$. In this case, the ratio is $g_d(\mu^*) = d+2\sqrt{d-1}$.
\end{proof}

\section{Lower Bound for List Scheduling}\label{sec.listlowerbound}

Lastly, we prove a lower bound of $d$ on the approximation ratio of any deterministic algorithm that, for the second phase, uses list scheduling with only \emph{local} priority considerations (i.e., without taking into account the precedence graphs when assigning priorities to the jobs). This lower bound holds regardless of the resource allocation scheme for the first phase. The result shows that our multi-resource scheduling algorithms essentially achieve tight approximation ratios up to the dominating factor for large $d$ among the generic class of local list scheduling schemes.

\begin{figure}
\center
\includegraphics[height=9cm]{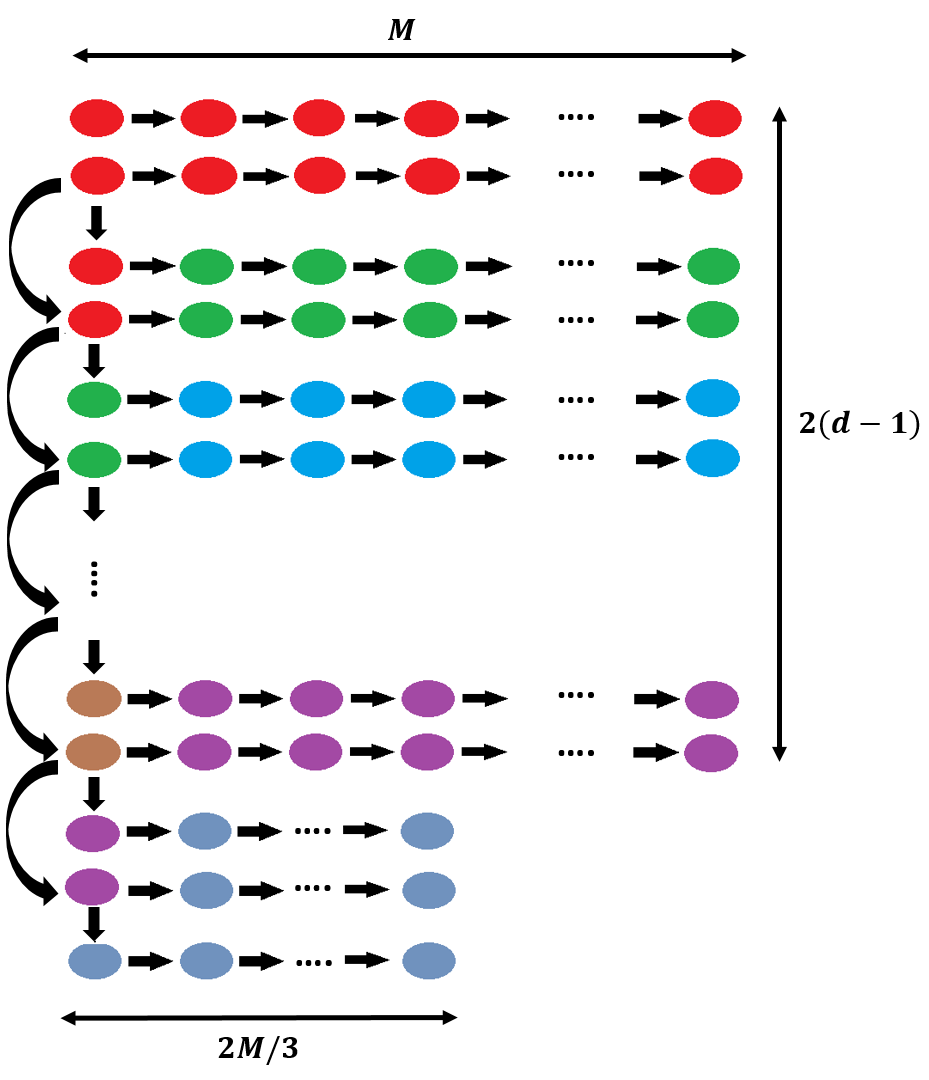}
\caption{Lower bound instance with an approximation ratio of $d$ for any deterministic list scheduling algorithm with local job priority considerations.}\label{fig.tight}
\end{figure}

\begin{theorem}\label{thm.lowerbound}
Any deterministic list scheduling algorithm with local job priority considerations is no better than $d$-approximation for the multi-resource scheduling problem.
\end{theorem}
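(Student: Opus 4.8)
The plan is to exhibit, for any deterministic list scheduler $L$ whose job priorities are computed locally (ignoring the precedence graph) and for any resource-allocation rule used in the first phase, a family of instances on which $L$'s makespan is at least $(d-\epsilon)$ times the optimum, for every $\epsilon>0$. The intuition: build a ``layered'' instance in which every layer uses only one resource type and the layers are chained, so that $L$ is forced to run the layers one after another and therefore idles $(d-1)/d$ of the resources throughout; an optimal schedule, by contrast, first runs the cheap ``spine'' of the chain and then packs the heavy work of all $d$ layers in parallel, one resource type per layer.

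Concretely, I would take $d$ resource types with $P^{(i)}=1$, a large integer $N$, and $d$ layers. Layer $k$ consists of $N+1$ jobs, all requiring the full unit of resource type $k$ (and nothing else) with execution time $\approx 1/N$; one of them is designated the \emph{spine} job $\sigma_k$ and the other $N$ are \emph{fillers}. The precedence relation makes $\sigma_k$ the unique predecessor of \emph{every} job of layer $k+1$ (layer $1$ being sources), so the critical path $\sigma_1\to\cdots\to\sigma_d$ has length $\approx d/N$. These jobs satisfy Assumptions~1--3, and --- since using a second resource type is dominated and Assumption~\ref{assume.monotonic} lets the execution time be made arbitrarily large on allocations incomparable to the intended one --- the intended allocation is essentially forced; hence the first-phase rule has no useful freedom and the bound is independent of it. (A short separate argument shows that whatever the first-phase rule does, area conservation keeps the optimum $\Theta(1)$ while $L$ still wastes $\Theta(d)$ resource, so the gap is unchanged.)

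Given this instance, three short steps finish the proof. \emph{(Optimum.)} Run the entire spine $\sigma_1,\dots,\sigma_d$ sequentially in $[0,d/N]$ (each $\sigma_k$ on its own resource type), releasing all layers; then run the $N$ fillers of layer $k$ sequentially on resource type $k$, all $d$ types in parallel, finishing by time $1+d/N$, so $T_{\opt}\le 1+d/N$; together with $T_{\opt}\ge A\ge 1-O(1/N)$ from Lemma~\ref{lem.makespan_lb} this pins $T_{\opt}=1\pm O(1/N)$. \emph{(Algorithm.)} The $N+1$ jobs of a layer are pairwise in conflict (each needs the whole of resource type $k$), so $L$ runs them one at a time; being indistinguishable to a local priority rule, they may be relabeled --- this is exactly where ``no precedence in the priorities'' is used --- so that $L$ runs $\sigma_k$ last in its layer. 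Then layer $k+1$ becomes ready only when layer $k$ finishes, nothing outside the current layer is ever ready, and $L$ executes the layers strictly in order, each taking time $(N+1)/N$; hence $T=d(N+1)/N=d+d/N$, with only one of the $d$ resource types in use at any instant. \emph{(Ratio.)} Therefore $T/T_{\opt}\ge (d+d/N)/(1+d/N)=d(N+1)/(N+d)\to d$ as $N\to\infty$, so for every $\epsilon>0$ some instance of the family has ratio $>d-\epsilon$, i.e., $L$ cannot be a $c$-approximation for any $c<d$.

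The step I expect to be the crux is the lower bound on $L$: making precise that a \emph{local} priority rule cannot evade scheduling the spine job late --- for an arbitrary such rule, anticipated by the adversary when it fixes the labels --- which is precisely where the hypothesis bites, since a critical-path-aware rule would promote $\sigma_k$, run the spine first, and match the optimum. The secondary difficulty is the ``regardless of the resource-allocation scheme'' clause: I must choose the execution-time functions on non-intended allocations carefully (using the slack Assumption~\ref{assume.monotonic} allows on incomparable and on smaller allocations) so that no first-phase allocation can shrink the $\Theta(d)$ gap --- but area conservation makes this routine once the functions are set up.
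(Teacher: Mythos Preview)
Your core construction is correct and shares the paper's essential mechanism: hide one ``gateway'' job per resource type among many locally identical siblings, so that a local priority rule can be forced (via adversarial relabeling) to run the gateway last, serializing the $d$ blocks. The paper's instance differs in the particulars --- it uses a tree with $P^{(i)}=2$ and $2Md$ unit-time jobs, and obtains a ratio that strictly \emph{exceeds} $d$ (tending to $d+\tfrac{1}{3}$ as $M\to\infty$) --- whereas your layered chain with $P^{(i)}=1$ gives $d(N+1)/(N+d)<d$, approaching $d$ only in the limit. Both suffice for ``not $c$-approximate for any $c<d$'', though the paper's bound is marginally sharper.

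The step you call ``routine'' is the weakest point. A precedence-aware first phase can identify $\sigma_k$ and deliberately assign it a dominated allocation --- say $\{k,i\}$ with the same execution time --- purely as a \emph{tag}; a local priority rule keyed to that allocation pattern would then run the spine first and match the optimum. Assumption~\ref{assume.monotonic} only lets you inflate $t_j$ on allocations \emph{not} dominating the intended one; for any $p'\succ\{k\}$ it forces $t_j(p')\le t_j(\{k\})$, so you cannot penalize the tag through the execution-time function, and ``area conservation'' says nothing once the second phase recognizes the spine. The paper's proof is equally silent on this --- it simply asserts that a local scheduler ``cannot distinguish jobs that require the same resource type,'' which already presupposes that all same-type jobs received the same allocation. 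Read charitably, ``regardless of the resource allocation scheme'' means first phases that output non-dominated allocations (as any sensible Phase~1, including the paper's Algorithm~\ref{alg.allocation}, would); under that reading your argument is complete, but you should state this restriction explicitly rather than claim the fully general case follows from area considerations.
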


\begin{proof}
The lower bound is constructed by using a set of jobs whose precedence constraints form a tree. Each job takes unit-time to complete, and only requires a unit resource allocation from a single resource type.
For each resource type $i$, there is a total amount $P^{(i)} = 2$ of available resource.
Figure \ref{fig.tight} illustrates our lower bound instance with $n = 2Md$ jobs, where $M$ is an integer multiple of 3. The nodes represent the jobs, the arrows represent the precedence constraints, and the color of a node represents the single resource type the corresponding job requires.

The optimal schedule can be obtained by prioritizing the job dependencies going downward, resulting in a makespan of $T_{\opt} = M + d - 1$.
Any deterministic list scheduling algorithm with only local priority considerations cannot distinguish jobs that require the same resource type. Hence, in the worst-case, it could only utilize one type of resource at any time, resulting in a makespan of $T = M(d-1) + \frac{4M}{3} = Md + \frac{M}{3}$.
Choosing $M > 3(d^2-d)$, the worst-case approximation ratio is:
\begin{align*}
\frac{T}{T_{\opt}} = \frac{Md+\frac{M}{3}}{M+d-1} = \frac{d+\frac{1}{3}}{1+\frac{d-1}{M}} > \frac{d+\frac{1}{3}}{1+\frac{1}{3d}} = d \ .
\end{align*}
This completes the proof of the theorem.
\end{proof}

\section{Conclusion}\label{sec.conclusion}

In this paper, we have studied the problem of scheduling parallel jobs with precedence constraints under multiple types of schedulable resources. We focused on moldable jobs, which allow the scheduler to flexibly select a variable set of resources before the execution of the jobs, and the goal is to minimize the overall completion time, or the makespan. We have proposed a multi-resource scheduling algorithm that adopts the two-phase approach by combining an approximate
resource allocation and an extended list scheduling scheme. We have proven approximation ratios of the algorithm for the general precedence graph, as well as for some special graphs including SP-DAGs or trees and independent jobs. The results are summarized in Table \ref{tab.summary}. We have also proven a lower bound on the approximation ratio of any local list scheduling scheme, which shows that our algorithm achieves the optimal asymptotic performance up to the dominating factor.

We point out that the lower bound proven in Theorem \ref{thm.lowerbound} does not rule out the possibility of a \emph{global} list scheduling algorithm that considers the structure of the precedence graph when determining the priorities for the jobs (e.g., giving priority to the jobs on the critical path). It remains an open question to find such an algorithm by showing a better approximation ratio than $d$, or to prove a matching lower bound for \emph{any} list-based scheduling scheme.

\begin{table}
\center
\caption{Summary of approximation results.}\label{tab.summary}
\begin{tabular}{|c|l|}
\hline
\tabincell{c}{Precedence} & ~~~~~~~~~Approximation Ratio \\
\hline
\hline
\tabincell{c}{General \\ Graphs} & \tabincell{l}{$\bullet$ $1.619d+2.545\sqrt{d}+1$ for $d\ge 1$ \\ $\bullet$ $d+3\sqrt[3]{d^2}+O(\sqrt[3]{d})$ for $d\ge 22$} \\
\hline
\tabincell{c}{SP Graphs \\ or Trees} & \tabincell{l}{$\bullet$ $(1+\epsilon) \left(1.619 d + 1 \right)$ for $d\ge 1$ \\  $\bullet$ $(1+\epsilon) \left(d+2\sqrt{d-1}\right)$ for $d\ge 4$} \\
\hline
\tabincell{c}{Independent \\ Jobs} & \tabincell{l}{$\bullet$ $2d$ for $d\ge 1$ \cite{Sun18_multiresource} \\ $\bullet$ $1.619d+1$ for $d = 3$ \\ $\bullet$ $d+2\sqrt{d-1}$ for $d\ge 4$} \\
\hline
\end{tabular}
\end{table}

\bibliographystyle{abbrv}

\end{document}